\newtheorem{theorem}{Theorem} 
\newtheorem{lemma}[theorem]{Lemma}
\def\sysname{\textsc{LiquiRIS}\xspace}
\def\legacy{\textsc{Legacy}\xspace}
\def\methodA{\textsc{LiquiRIS-Single}\xspace}
\def\methodB{\textsc{LiquiRIS-Multi}\xspace}
\definecolor{DarkGreen}{RGB}{0,150,0}
\definecolor{Orange}{RGB}{245,100,10}
\newacronym{AoA}{AoA}{angle of arrival}
\newacronym{AoD}{AoD}{angle of departure}
\newacronym{DAC}{DAC}{digital to analog converter}
\newacronym{DC}{DC}{direct current}
\newacronym{DRIE}{DRIE}{deep reactive Ion etching}
\newacronym{ISML}{IMSL}{inverted microStrip line}
\newacronym{LCD}{LCD}{liquid crystal display}
\newacronym{LoS}{LoS}{line-of-sight}
\newacronym{LC}{LC}{liquid crystal}
\newacronym{LC-RIS}{LC-RIS}{}
\newacronym{NLoS}{NLoS}{non-line-of-sight}
\newacronym{MEMS}{MEMS}{micro-electro-mechanical systems}
\newacronym{NLC}{NLC}{nematic liquid crystal}
\newacronym{PIN}{PIN}{positive-intrinsic-negative}
\newacronym{RF}{RF}{radio frequency}
\newacronym{RIS}{RIS}{reconfigurable intelligent surface}
\newacronym{SNR}{SNR}{signal-to-noise ratio}
\newacronym{SINR}{SINR}{signal-to-noise-plus-interference ratio}
\newacronym{TFT}{TFT}{thin-film transistor}
\newacronym{NRCS}{NRCS}{normalized RIS cross-section}
\newacronym{AWGN}{AWGN}{additive white Gaussian noise}
\newacronym{CSI}{CSI}{channel state information}
\newacronym{MILP}{MILP}{mixed-integer linear program}
\newacronym{CDF}{CDF}{cumulative distribution function}
\newacronym{MT}{MT}{mobile terminal}
\newacronym{BS}{BS}{base station}
\titleformat{\subsubsection}[runin]
{\normalfont\bfseries}{\thesubsubsection}{1em}{}[:$~$]
\DeclareSIUnit{\belisotropic}{Bi}
\DeclareSIUnit{\dBi}{\deci\belisotropic}
\DeclareSIUnit{\bit}{bit}
\let\save@mathaccent\mathaccent
\newcommand*\if@single[3]{%
  \setbox0\hbox{${\mathaccent"0362{#1}}^H$}%
  \setbox2\hbox{${\mathaccent"0362{\kern0pt#1}}^H$}%
  \ifdim\ht0=\ht2 #3\else #2\fi
  }
\newcommand*\rel@kern[1]{\kern#1\dimexpr\macc@kerna}
\newcommand*\widebar[1]{\@ifnextchar^{{\wide@bar{#1}{0}}}{\wide@bar{#1}{1}}}
\newcommand*\wide@bar[2]{\if@single{#1}{\wide@bar@{#1}{#2}{1}}{\wide@bar@{#1}{#2}{2}}}
\newcommand*\wide@bar@[3]{%
  \begingroup
  \def\mathaccent##1##2{%
%Enable nesting of accents:
    \let\mathaccent\save@mathaccent
%If there's more than a single symbol, use the first character instead (see below):
    \if#32 \let\macc@nucleus\first@char \fi
%Determine the italic correction:
    \setbox\z@\hbox{$\macc@style{\macc@nucleus}_{}$}%
    \setbox\tw@\hbox{$\macc@style{\macc@nucleus}{}_{}$}%
    \dimen@\wd\tw@
    \advance\dimen@-\wd\z@
%Now \dimen@ is the italic correction of the symbol.
    \divide\dimen@ 3
    \@tempdima\wd\tw@
    \advance\@tempdima-\scriptspace
%Now \@tempdima is the width of the symbol.
    \divide\@tempdima 10
    \advance\dimen@-\@tempdima
%Now \dimen@ = (italic correction / 3) - (Breite / 10)
    \ifdim\dimen@>\z@ \dimen@0pt\fi
%The bar will be shortened in the case \dimen@<0 !
    \rel@kern{0.6}\kern-\dimen@
    \if#31
      \overline{\rel@kern{-0.6}\kern\dimen@\macc@nucleus\rel@kern{0.4}\kern\dimen@}%
      \advance\dimen@0.4\dimexpr\macc@kerna
%Place the combined final kern (-\dimen@) if it is >0 or if a superscript follows:
      \let\final@kern#2%
      \ifdim\dimen@<\z@ \let\final@kern1\fi
      \if\final@kern1 \kern-\dimen@\fi
    \else
      \overline{\rel@kern{-0.6}\kern\dimen@#1}%
    \fi
  }%
  \macc@depth\@ne
  \let\math@bgroup\@empty \let\math@egroup\macc@set@skewchar
  \mathsurround\z@ \frozen@everymath{\mathgroup\macc@group\relax}%
  \macc@set@skewchar\relax
  \let\mathaccentV\macc@nested@a
%The following initialises \macc@kerna and calls \mathaccent:
  \if#31
    \macc@nested@a\relax111{#1}%
  \else
%If the argument consists of more than one symbol, and if the first token is
%a letter, use that letter for the computations:
    \def\gobble@till@marker##1\endmarker{}%
    \futurelet\first@char\gobble@till@marker#1\endmarker
    \ifcat\noexpand\first@char A\else
      \def\first@char{}%
    \fi
    \macc@nested@a\relax111{\first@char}%
  \fi
  \endgroup
}
\begin{document}
%-------------------------------------------------------------------------------

\title{Fast-Reconfiguring Liquid-Crystal RIS for Pervasive Wireless Networks\\[-4mm]}

\author{
	\IEEEauthorblockN{
	Luis F. Abanto-Leon\IEEEauthorrefmark{1}\textsuperscript{\textsection}, 
	Robin Neuder\IEEEauthorrefmark{2}\textsuperscript{\textsection}, 
	Waqar Ahmed\IEEEauthorrefmark{5}, 
	Alejandro Jimenez Saez\IEEEauthorrefmark{2}, 
	Vahid Jamali\IEEEauthorrefmark{2}, 
	Arash Asadi\IEEEauthorrefmark{3}
	}
	\IEEEauthorblockA{
	\IEEEauthorrefmark{1}Ruhr University Bochum, 
	\IEEEauthorrefmark{2}TU Darmstadt, 
	\IEEEauthorrefmark{5}Microchip Technology, 
	\IEEEauthorrefmark{3}TU Delft
	} 
}

\maketitle
\begingroup\renewcommand\thefootnote{\textsection}
\footnotetext{Both authors contributed equally to this research.}
\endgroup

\maketitle

%-------------------------------------------------------------------------------

%-------------------------------------------------------------------------------
\begin{abstract}

Reconfigurable intelligent surfaces (RISs) have emerged as a key technology for dynamically reshaping wireless propagation, enhancing coverage and mitigating blockages to enable more pervasive network connectivity. However, implementing RISs at high frequencies remains challenging due to the cost and power demands of semiconductor-based components. To address these critical limitations, liquid crystals (LCs) technology has been identified as a promising low-cost and low-power alternative, giving rise to LC-RIS. The central challenge of this technology, however, lies in its limited responsiveness, as the slow molecular dynamics of LCs lead to long phase-shift reconfiguration times that restrict practicality.

This paper presents \sysname, a novel framework that enables substantially faster phase shifting in LC-RIS. By explicitly incorporating the physical dynamics of LC molecules into the phase-shift configuration process, \sysname intelligently selects phase transitions that minimize the overall reconfiguration time. As a result, \sysname achieves up to $ 71.61 \% $ reduction in overall reconfiguration time compared to conventional schemes, significantly improving the feasibility of LC-RIS deployment. The proposed framework is further validated through experiments on a mmWave LC-RIS prototype.

\end{abstract}

\begin{IEEEkeywords}
Reconfigurable intelligent surfaces, modeling, phase-shift configuration, prototyping, liquid crystals.
\end{IEEEkeywords}

%This paper presents \sysname, a novel framework that enables substantially faster phase-shift reconfiguration in \glspl{LC-RIS}. By incorporating the intrinsic dynamic behavior of \glspl{LC} materials into the phase-shift optimization process, \sysname minimizes the total reconfiguration time required by an  \gls{RIS} to transition between configurations. Experimental validation on a mmWave \gls{LC-RIS} prototype demonstrates that \sysname reduces reconfiguration time by more than $ 65\% $ compared to conventional schemes, significantly enhancing the viability of \gls{LC-RIS} deployment for dynamic communication applications.

\widowpenalty=10
\clubpenalty=10
\brokenpenalty=10

%\addtolength{\topmargin}{0.038 in}
%\addtolength{\textheight}{-0.12cm}

\glsresetall

% !TEX root = ./main.tex

\section{Introduction}

Reconfigurable intelligent surfaces (RISs) have emerged as a promising technology for dynamically controlling radio frequency (RF) wave propagation through adjustment of reflected signal properties. RISs are envisioned to enhance wireless connectivity by improving signal strength, extending coverage, increasing capacity, and mitigating blockages. This capability to enable seamless and pervasive wireless environments, from homes and IoT networks to public spaces and smart cities, has attracted significant research interest.

The majority of existing RIS designs implement phase shifting using semiconductor components such as PIN diodes~\cite{rossanese2022designing, gros2021reconfigurable, zeng2021high, amri2021reconfigurable}, varactors~\cite{pei2021ris, araghi2022reconfigurable, sievenpiper2003two}, and MEMS devices~\cite{liu2022terahertz}, with most demonstrations at sub-6 GHz~\cite{yezhen2020novel, trichopoulos2021design, dai2020reconfigurable, VincentPoor2020, RomainDiRenzo2021,arun2020rfocus, scattermimo, rossanese2022designing} and a few at mmWave frequencies~\cite{gros2021reconfigurable, tan2018enabling}, showing promising results. However, large-scale deployment of semiconductor-based RISs, particularly at higher frequencies, remains hindered by intrinsic drawbacks such as high power consumption, high fabrication costs, limited scalability, and coarse phase resolution. \emph{These limitations have motivated the exploration of alternative technologies that are inherently more energy-efficient, cost-effective, and scalable.}

\textbf{Opportunities.} Liquid crystals (LCs) have emerged as a promising alternative to semiconductor components, offering a viable path to overcoming the limitations of conventional RIS designs~\cite{neuder2023compact, wu2020liquid}. LC-RISs consume minimal power and can be fabricated at substantially lower cost, which remains nearly invariant across frequencies. LC-RISs are amenable to large-scale manufacturing and inherently support fine-grained phase control~\cite{jimenez2023reconfigurable}. \emph{Collectively, these characteristics position LC-RISs as a promising technology for sustainable and cost-effective RIS deployments.}

\begin{figure}[!t]
	\begin{center}
		% Figure 1a
	 	\begin{subfigure}{0.43\columnwidth}
				\includegraphics[width=1\columnwidth]{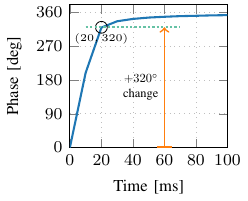} 
				\caption{\centering Positive phase change}
				\label{figure_phase_transition_time_a}
	 	\end{subfigure}
	  	\hspace{5mm}
	 	\begin{subfigure}{0.43\columnwidth}
				\includegraphics[width=1\columnwidth]{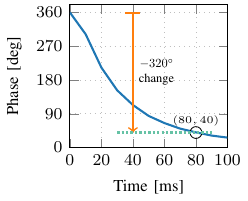} 
				\caption{\centering Negative phase change}
				\label{figure_phase_transition_time_b}
		\end{subfigure}
		\vspace{-2mm}
	    \caption{Response time of an LC-RIS unit-cell, illustrating that positive phase changes occur faster than negative ones. \emph{For example, a positive phase change of $320^\circ$ completes in $\approx 20$ ms, whereas a negative change of the same magnitude takes $\approx 80$ ms.}}
	 	\label{figure_phase_transition_time}
 	\end{center}
 	\vspace{-7mm}
\end{figure}

\textbf{Challenges.} Despite these advantages, the adoption of LC-RISs is constrained by their limited responsiveness. Phase shifting in LC-RISs is realized through physical reorientation of LC molecules, a process that is substantially slower than electronic switching in semiconductor-based counterparts. Moreover, this reorientation process is highly asymmetric, with negative phase-shift changes taking considerably longer than positive ones, as illustrated in Fig. \ref{figure_phase_transition_time}, which further exacerbates the overall \emph{reconfiguration time} of LC-RISs and hinders their ability to rapidly adapt.\footnote{In this work, \emph{reconfiguration time} refers to the total duration required by an LC-RIS to transition from one phase-shift configuration to another, i.e., the time taken for all LC unit-cells to reorient their molecules to the target states. This is distinct from \emph{response time}, an established term which describes the same transition duration, but for a single LC unit-cell.}

Existing studies have attempted to mitigate this issue by developing new LC composites \cite{guirado2022mm}, reducing LC layer thickness \cite{neuder2024architecture}, or using overshooting techniques~\cite{yang_fundamentals_2006} that apply high voltages to accelerate molecular reorientation. Although these hardware-oriented solutions offer partial improvements, they remain constrained by stability concerns, fabrication precision, and increased cost. \emph{These limitations highlight the need for complementary approaches that reduce LC-RIS reconfiguration time without relying solely on hardware advancements.}

% LiquiRIS
\textbf{Solution.} We propose \sysname, a novel algorithmic approach that incorporates the intrinsic dynamic behavior of LC materials into the phase-shift optimization process. Conventional schemes, designed for semiconductor-based RISs, are agnostic to these dynamics, leading to prohibitively long reconfiguration times. \emph{By explicitly accounting for LC molecular reorientation dynamics, \sysname identifies phase-shift configurations that meet communication requirements with minimal reconfiguration time, thereby enabling rapid and efficient LC-RIS adaptation.}

%The key insight of \sysname\ is that fast reconfiguration requires optimizing for communication requirements while accounting for the intrinsic reorientation dynamics of LCs. 

{\bf Contributions.} Our main contributions are:
\begin{itemize}
\itemsep0em 

	\item Leveraging empirical measurements from an LC unit-cell, we develop a piecewise analytical model that accurately captures the asymmetric dynamics governing phase change and response time. This model is highly accurate, with a total error of less than $ 0.03\% $ compared to ground-truth measurements.

	\item Building on this model, we formulate the problem of minimizing LC-RIS reconfiguration time subject to communication constraints. We propose two convex optimization-based algorithms, \methodA and \methodB, both expressed as tractable mixed-integer linear program (MILP). The first algorithm minimizes reconfiguration time for sequential user serving, while the second optimizes simultaneously a sequence of multiple configurations, thereby achieving reduced overall reconfiguration time. The proposed algorithms reduce the reconfiguration time by up to $ 71.61\% $ and $ 64.36 \% $, respectively, compared to conventional schemes.
                       
	\item We investigate the impact of channel state information (CSI) accuracy on reconfiguration time and demonstrate that imperfect CSI significantly prolongs it, revealing a previously unreported effect.
	
	\item We experimentally validate our approach on an LC-RIS prototype operating at mmWave frequencies. 
            
\end{itemize}

\textbf{Related work.} While numerous studies have sought to mitigate the long response times of LC-based designs through hardware innovations~\cite{guirado2022mm, neuder2024architecture, chien2017fast, guo2020line, song2012ultrafast}, only a few have explored algorithmic approaches. For instance, \cite{delbari2024fast, delbari2025fast} proposed penalizing negative phase transitions more heavily than positive ones to account for the asymmetric dynamics illustrated in Fig \ref{figure_phase_transition_time}. Although intuitive, these approaches relied on heuristic weighting factors and therefore fail to accurately capture the underlying LC dynamics. In contrast, our work develops a data-driven piecewise analytical model that closely fits empirical LC behavior, ensuring high fidelity. Furthermore, we investigate the co-optimization of multiple phase-shift configurations, an aspect unexplored in prior art. In addition, we uncover the interplay between CSI inaccuracy and reconfiguration time, revealing a critical limiting factor.

\emph{Notation}: Matrices and vectors are denoted by $ \mathbf{X} $ and $ \mathbf{x} $, respectively. The transpose and Hermitian transpose of $ \mathbf{X} $ are denoted by $ \mathbf{X}^\mathrm{T} $ and $ \mathbf{X}^\mathrm{H} $, respectively. Also, $ \mathrm{j} \triangleq \sqrt{-1} $ is the imaginary unit, $ \mathbb{E} \left\lbrace \cdot \right\rbrace  $ denotes statistical expectation, and $ \mathcal{CN} \left( \upsilon, \xi^2 \right) $ denotes the complex Gaussian distribution with mean $ \upsilon $ and variance $ \xi^2 $. The set of complex, real, and nonnegative real numbers are denoted by $ \mathbb{C} $, $ \mathbb{R} $, and $ \mathbb{R}_{+} $, respectively.

%\section{Primer on LC technology}\label{Sec:Basic}

%\glspl{LC} are materials whose physical properties lie between those of liquids and solid crystals. Their molecules can flow like those in a liquid while retaining partial molecular order. This unique nature enables their physical properties to be reconfigured via externally applied bias voltages, allowing precise control over how \gls{RF} waves are influenced by the \gls{LC} medium.  \emph{The underlying mechanism resembles that of \glspl{LCD}, where molecular orientation determines light polarization, thereby impacting pixel intensity. In an \gls{LC-RIS}, the same mechanism governs how molecular alignment alters the effective permittivity experienced by an \gls{RF} wave, thereby inducing a controllable phase shift.}

%\input{sections/design}
\section{System Model and Problem Formulation} 
\label{section_approach}
%\section{System model} 
% \label{section_agile_reconfiguration_lcss}
\label{section_SysModel}

%This section presents the system model and then formulates two distinct optimization problems to minimize LC-RIS reconfiguration time. 

%\vspace{0.5mm}
\subsection{System Model}

We consider a system comprising a base station (BS), a set of $ L $ mobile terminals (MTs), and a planar LC-RIS that is located in the far-field relative to the BS and MTs. The LC-RIS is centered at location $ (0,0,0) $ in the $xz$-plane and has $ N_\mathrm{x} $ unit-cells per row and $ N_\mathrm{z} $ unit-cells per column, making a total of $ N = N_\mathrm{x} N_\mathrm{z} $ unit-cells. Each of the unit-cells has a constant modulus and a tunable phase. 
\begin{figure}[!t]
	\centering
	\includegraphics[width=0.97\columnwidth]{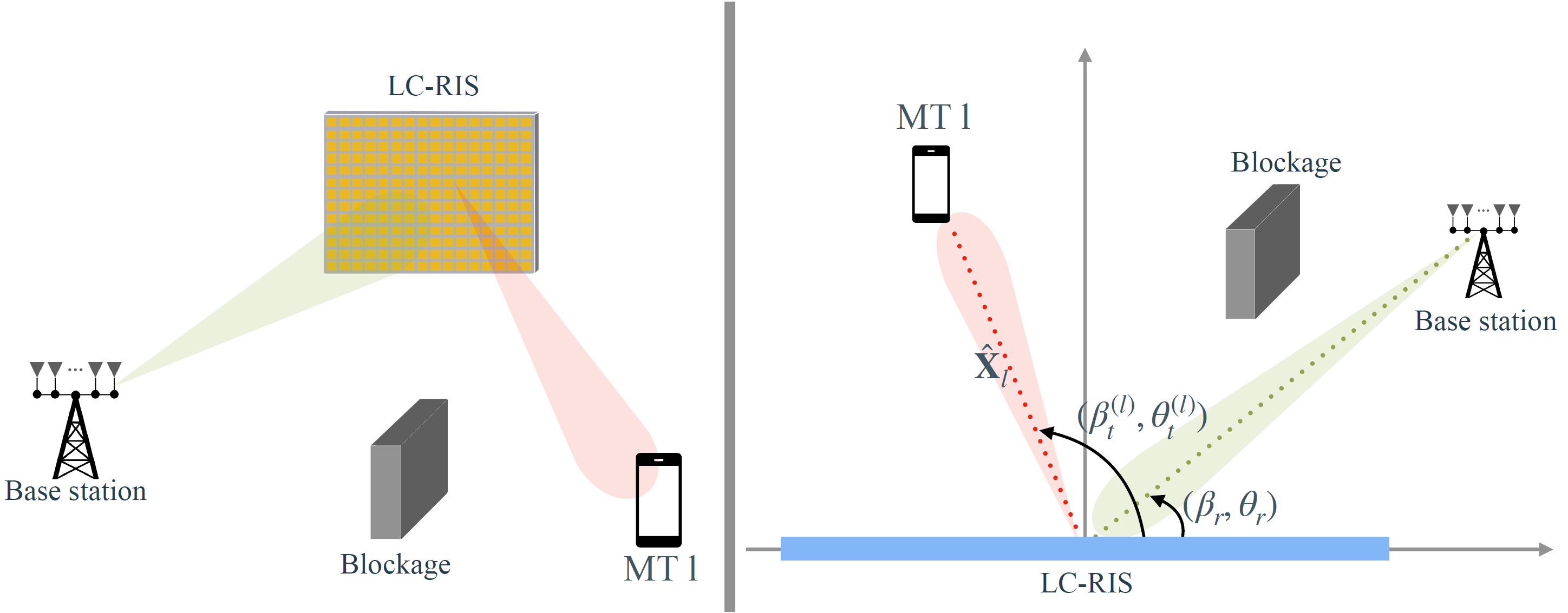} 
	\caption{Overview of our system model.}
	%\vspace{-1mm}
	\label{figure_system_model}
\end{figure}

The BS is equipped with a set of predefined beams, each oriented toward a different angle relative to the LC-RIS. In the event of a blockage between the BS and MTs, as illustrated in~Fig. \ref{figure_system_model}, the BS selects one of these beams to serve the MTs, using the LC-RIS as a reflector. We assume that the selected beam remains fixed over a transmission interval during which the MTs are served. Since the BS serves one MT at a time, the LC-RIS must be reconfigured to reflect the received signals toward the corresponding $l $-th MT, where $ l \in \mathcal{L} = \left\lbrace 1, \dots, L \right\rbrace $.

The channel between the BS and the LC-RIS follows a Rician model, expressed as
\begin{align}
	\mathbf{g} = \rho \left( \sqrt{\frac{\Lambda}{\Lambda+1}} \mathbf{g}_\mathrm{LoS} + \sqrt{\frac{1}{\Lambda+1}} \mathbf{g}_\mathrm{NLoS} \right) ,
\end{align}
where $ \rho $ denotes the large-scale fading coefficient, $ \Lambda $ is the Rician factor, $ \mathbf{g}_\mathrm{LoS} = \big[ e^{-\mathrm{j} \frac{2 \pi}{\lambda} \psi_1}, \dots, e^{-\mathrm{j} \frac{2 \pi}{\lambda} \psi_N} \big]^\mathrm{T} $ represents the line-of-sight (LoS) component, and $ \mathbf{g}_\mathrm{NLoS} \sim \mathcal{CN} \left( \mathbf{0}, \mathbf{I} \right) $ models the non-line-of-sight (NLoS) scattering components. Here, $ \psi_n = \mathbf{v}_\mathrm{r}^\mathrm{T} \mathbf{p}_n $, where $ \mathbf{v}_\mathrm{r} = \left[ \sin(\theta_\mathrm{r}) \cos(\beta_\mathrm{r}), \sin(\theta_\mathrm{r}) \sin(\beta_\mathrm{r}), \cos(\theta_\mathrm{r}) \right]^\mathrm{T} \in \mathbb{R}^{3 \times 1} $ captures the phase variations of the incident wave over the LC-RIS, and $\mathbf{p}_n \in \mathbb{R}^{3 \times 1}$ denotes the Cartesian coordinates of the $n$-th unit-cell, with $ n \in \mathcal{N} = \left\lbrace 1, \dots, N \right\rbrace $. The parameters $\beta_\mathrm{r}$ and $\theta_\mathrm{r}$ correspond to the azimuth and elevation angles of arrival (AoA) of the signal impinging on the LC-RIS, respectively. 

Similarly, the channel between the LC-RIS and the $l$-th MT is expressed as
\begin{align}
	\mathbf{h}_l = \mu^{(l)} \left( \sqrt{\frac{\Lambda_l}{\Lambda_l+1}} \mathbf{h}_{\mathrm{LoS},l} + \sqrt{\frac{1}{\Lambda_l+1}} \mathbf{h}_{\mathrm{NLoS},l}  \right) ,
\end{align}
where $ \mu^{(l)} $ denotes the large-scale fading coefficient, $ \Lambda_l $ is the Rician factor associated with the $ l $-th MT, $ \mathbf{h}_{\mathrm{LoS},l} = \big[ e^{-\mathrm{j} \frac{2 \pi}{\lambda} \xi_1^{(l)}}, \dots, e^{-\mathrm{j} \frac{2 \pi}{\lambda} \xi_N^{(l)}} \big]^\mathrm{T} $ represents the LoS component, and $ \mathbf{h}_{\mathrm{NLoS},l} $ captures the NLoS scattering contribution. Here, $ \xi_n^{(l)} = {\mathbf{v}_\mathrm{t}^{(l)}}^\mathrm{T} \mathbf{p}_n $, with $ \mathbf{v}_\mathrm{t}^{(l)} = \big[ \sin(\theta_\mathrm{t}^{(l)}) \cos(\beta_\mathrm{t}^{(l)}), \sin(\theta_\mathrm{t}^{(l)}) \sin(\beta_\mathrm{t}^{(l)}),  \cos(\theta_\mathrm{t}^{(l)}) \big]^\mathrm{T} \in \mathbb{R}^{3 \times 1} $ characterizing the phase variations of the reflected wave. The parameters $ \beta_\mathrm{t}^{(l)} $ and $ \theta_\mathrm{t}^{(l)} $ denote the azimuth and elevation angles of departure (AoDs) for the $l$-th MT, respectively.

The phase-shift configuration for the $l$-th MT is defined as
\begin{align}
	\hat{\mathbf{x}}_l = \delta e^{\mathrm{j} \boldsymbol{\phi}_l} \in \mathbb{C}^{N \times 1},
\end{align}
where $\boldsymbol{\phi}_l = [\phi_{l,1}, \dots, \phi_{l,N}]^\mathrm{T} \in [0, 2\pi)^{N \times 1}$ represents the set of phase shifts applied across all LC-RIS unit-cells, $ \phi_{l,n} $ is the $n$-th phase shift, and $\delta$ denotes the unit-cell reflection coefficient.

Let $\hat{s}_l \in \mathbb{C} $ denote the signal transmitted to the $l$-th MT, such that  $ \mathbb{E} \left\lbrace \hat{s}_l \right\rbrace = 0 $ and $ \mathbb{E} \left\lbrace |\hat{s}_l|^2 \right\rbrace = 1 $. Thus, the received signal at the $l$-th MT is then given by
% Equation
\begin{align} 
	\begin{aligned}
	y_l & = \sqrt{P_\mathrm{BS} G_\mathrm{BS} G_\mathrm{MT}} \mathbf{h}_l^\mathrm{H} \mathrm{diag} (\hat{\mathbf{x}}_l ) \mathbf{g}^{*} \hat{s}_l + \eta_l,
	\\ 
	    & = \sqrt{P_\mathrm{BS} G_\mathrm{BS} G_\mathrm{MT}} \mathbf{h}_l^\mathrm{H} \mathrm{diag} (\delta e^{\mathrm{j} \boldsymbol{\phi}_l} ) \mathbf{g}^{*} \hat{s}_l + \eta_l,  
	    \\ 
	    & = \delta \sqrt{P_\mathrm{BS} G_\mathrm{BS} G_\mathrm{MT}} \mathbf{h}_l^\mathrm{H} \mathrm{diag} ( \mathbf{g}^{*} ) e^{\mathrm{j} \boldsymbol{\phi}_l} \hat{s}_l + \eta_l,
	\end{aligned}
\end{align}
where $P_\mathrm{BS}$ denotes the BS transmit power, $ G_\mathrm{BS} $ and $ G_\mathrm{MT} $ represent the antenna gains at the BS and MT, respectively, and $ \eta_l \sim \mathcal{CN} \left( 0, \sigma^2 \right) $ accounts for additive white Gaussian noise (AWGN). Therefore, the signal-to-noise ratio (SNR) at the $l$-th MT is defined as
% Equation
\begin{align*} 
	\mathsf{SNR} \left( \boldsymbol{\phi}_l \right) =  \frac{\delta^2 K^2}{\sigma^2} \left| \mathbf{h}_l^\mathrm{H} \mathrm{diag} ( \mathbf{g}^{*} ) e^{\mathrm{j} \boldsymbol{\phi}_l} \right|^2, \forall l \in \mathcal{L},
\end{align*}
where $ K^2 = P_\mathrm{BS} G_\mathrm{BS} G_\mathrm{MT} $.

\subsection{Problem formulation} \label{sec:optimization-problems}

We formulate two optimization problems aimed at minimizing the reconfiguration time. The first problem, termed \methodA, is a myopic optimizer that determines the next phase-shift configuration. The second, \methodB, is a foresighted optimizer that jointly determines an entire sequence of phase-shift configurations, thereby capturing interdependencies among successive configurations and minimizing even further the overall cumulative reconfiguration time compared to \methodA.

\textbf{\methodA:}
To serve the $l$-th MT, the LC-RIS must transition from configuration $\boldsymbol{\phi}_{l-1}$ to $\boldsymbol{\phi}_l$. The overall LC-RIS reconfiguration time is determined by the slowest unit-cell, i.e., the one requiring the longest time to complete its phase update. In addition, each MT must satisfy a minimum SNR requirement. Considering these factors, we formulate 
% Problem Pseq
\begin{align*} 
	% Objective
	\mathcal{P}^{(l)}_1: & ~~ \underset{ \boldsymbol{\phi}_l}{\mathrm{minimize}}
	& & 
	\max_{n \in \mathcal{N} } \hat{f} \left( \phi_{l,n} - \phi_{l-1,n} \right) 
	\\
	% Constraint C1
	& ~~~~~~ \mathrm{s.t.} & \mathrm{C}_{1}: ~ & \mathsf{SNR} \left( \boldsymbol{\phi}_l \right) \geq \Gamma_{\mathrm{th},l}, 
	\\
	% Constraint C2
	& & \mathrm{C}_2: ~ & \phi_{l,n} \in [0, 2\pi), \forall n \in \mathcal{N},  
\end{align*}
where function $ \hat{f}(\cdot) $ models the response time of an individual unit-cell as a function of its phase change.\footnote{The response-time function $ \hat{f} (\cdot) $ is characterized in Section \ref{sec:piecewise-function}.} Consequently, the objective seeks to minimize the overall reconfiguration time, determined by the slowest-responding unit cell. Constraint $ \mathrm{C}_{1} $ enforces a minimum SNR requirement of $ \Gamma_{\mathrm{th},l} $ for the $l$-th MT and constraint $ \mathrm{C}_{2} $ bounds the phase values to $ [0, 2\pi) $.

\textbf{\methodB:} This problem extends \methodA\ by jointly optimizing the phase-shift configurations for all $ L $ MTs, accounting for the cumulative response time required to transition sequentially between configurations (i.e., from $ \boldsymbol{\phi}_1 $ to $ \boldsymbol{\phi}_2 $, then $ \boldsymbol{\phi}_2 $ to $ \boldsymbol{\phi}_3 $, and so on), as shown below
% Problem Pseq
\begin{align*} 
	% Objective
	\mathcal{P}_2: & ~~ \underset{ \boldsymbol{\phi}_1, \dots, \boldsymbol{\phi}_L}{\mathrm{minimize}}
	& &  
	\sum_{l \in \mathcal{L}} \max_{n \in \mathcal{N} } \hat{f} \left( \phi_{l,n} - \phi_{l-1,n} \right) 
	\\
	% Constraint D1
	& ~~~~~~ \mathrm{s.t.} & \widebar{\mathrm{C}}_{1}: ~ & \mathsf{SNR} \left( \boldsymbol{\phi}_l \right) \geq \Gamma_{\mathrm{th},l}, \forall l \in \mathcal{L},
	\\
	% Constraint D2
	& & \widebar{\mathrm{C}}_{2}: ~ & \phi_{l,n} \in [0, 2\pi), \forall l \in \mathcal{L}, n \in \mathcal{N}, 
\end{align*}
where the objective minimizes the total reconfiguration time, defined as the sum of the slowest unit-cell response times across all consecutive configurations in the sequence.

\section{Response Time Model} \label{sec:response-time-model}

%This section first presents the fundamentals of LC-RIS technology. It then introduces a unit-cell response function, derived from empirical measurements, that captures the phase-change/response-time relationship.

\subsection{Primer on LC technology} \label{sec:primer-lc-technology}

The fundamental building block of an LC-RIS is a tunable unit-cell, as illustrated in Fig. \ref{fig:LCCharacteristics1}. At its core lies a layer of LC material, whose effect on an RF wave is governed by the orientation of the LC molecules relative to the wave's electric field. This molecular orientation determines the permittivity of the LC material, thereby controlling the phase shift imparted to the RF wave, as described in the following.
% Figure 1
\begin{figure}[!h]
	\begin{center}
	 	\begin{subfigure}{0.48\columnwidth}
				\includegraphics[width=0.9\textwidth]{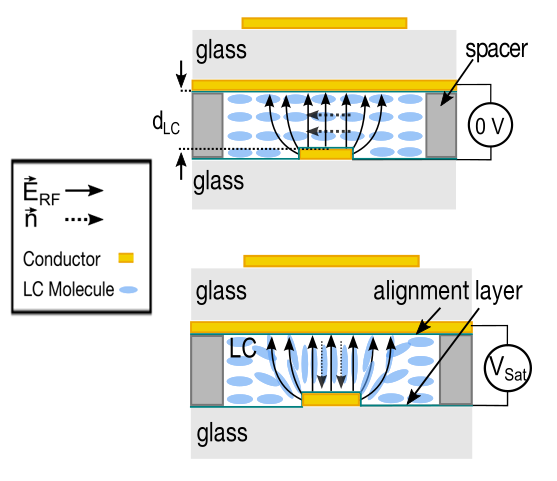}
				\caption{LC unit-cell}
				\label{fig:LCCharacteristics1}
	 	\end{subfigure}
	 	\begin{subfigure}{0.48\columnwidth}
				\includegraphics[width=0.94\textwidth]{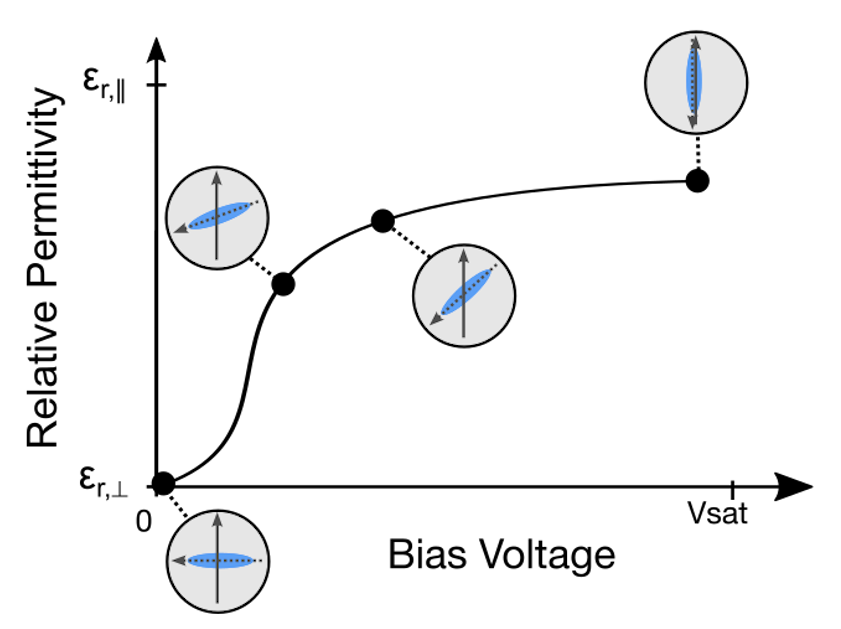}
				\caption{LC permittivity}
				\label{fig:LCCharacteristics2}
		\end{subfigure}
	    \caption{LC principles.}
	    \vspace{-1mm}
	 	\label{fig:LCCharacteristics}
 	\end{center}
 	\vspace{-4mm}
\end{figure}

$ \bullet $ \textit{Low permittivity:} In this regime, the RF wave propagates faster and experiences a smaller phase shift. This occurs when the long molecular axis $ \vec{n} $ is perpendicular to the wave's electric field $ \vec{E}_\mathrm{RF} $ (Fig. \ref{fig:LCCharacteristics1}, top), as the molecules in this orientation interact minimally with the incoming field.

$ \bullet $ \textit{High permittivity:} In this regime, the RF wave propagates more slowly, resulting in a larger phase shift. This occurs when the long molecular $ \vec{n} $ is aligned parallel to $ \vec{E}_\mathrm{RF} $ (Fig. \ref{fig:LCCharacteristics1}, bottom), as the molecules in this orientation interact more strongly with the wave's electric field $ \vec{E}_\mathrm{RF} $.

\textbf{Low to high permittivity variation:} The application of a bias voltage induces a relatively quick transition from a low- to a high-permittivity state, which corresponds to the \emph{positive phase-change} regime shown in Fig.~\ref{figure_phase_transition_time_a}. When no bias is applied, the LC molecules are oriented perpendicular to $ \vec{E}_\mathrm{RF} $, resulting in a lower relative permittivity $ \varepsilon_\mathrm{r,\perp} $, as illustrated in Fig.~\ref{figure_phase_transition_time_b}. As the bias voltage increases, the molecules gradually tilt toward $ \vec{E}_\mathrm{RF} $. At the saturation voltage $ V_\mathrm{sat} $, they become fully aligned with the field, yielding a higher relative permittivity $ \varepsilon_\mathrm{r,\parallel} $. Thus, by varying the bias voltage between $ 0 $ and $ V_\mathrm{sat} $, the permittivity, and consequently the phase of the traversing RF wave, can be precisely controlled.

\textbf{High to low permittivity variation:} Conversely, the transition from high to low permittivity is significantly slower, corresponding to the \emph{negative phase-change} regime shown in Fig.~\ref{figure_phase_transition_time_b}. This asymmetry arises because, once the bias is removed, the LC molecules rely solely on elastic restoring forces to return to their resting positions. The resulting slow relaxation process constitutes the primary bottleneck to achieving rapid LC-RIS reconfiguration.

%This asymmetric behavior is illustrated in \autoref{figure_phase_transition_time_a} and \autoref{figure_phase_transition_time_b}, which depict the time required for an LC-RIS unit-cell to transition its phase from $ 0^\circ $ to $ 360^\circ $ (positive change, low-to-high permittivity) and from $ 360^\circ$ back to $ 0^\circ $ (negative change, high-to-low permittivity), respectively. Notably, a positive phase change of $ 320^\circ $ completes significantly faster ($ \approx 20 $ms), than a negative change of the same magnitude ($ \approx 80 $ms), highlighting the pronounced asymmetry in response dynamics. 

\subsection{Piecewise function for response time characterization} \label{sec:piecewise-function}

We characterize the LC unit-cell's response time through an experimental campaign, collecting $400$ measurements that reveal the ground-truth relationship between phase change and response time.

However, directly exploiting this dataset to compute the response time corresponding to a given phase-shift change is impractical, as the measurements exhibit small oscillations and irregularities. To address this issue, we develop a piecewise linear approximation that constructs a convex function $ \hat{f}: \mathcal{R} \rightarrow \mathbb{R}_{+} $ closely matching the empirical data. This representation is particularly suitable as it yields a tractable analytical form compatible with our convex optimization framework employed to solve problems $ \mathcal{P}^{(l)}_1 $ and $ \mathcal{P}_2 $, formulated in Section \ref{sec:optimization-problems}.

We select $I$ representative samples from the collected dataset and define a piecewise linear function $\hat{f}$, constructed to be convex, defined as follows
% Equation
\begin{equation} \label{equation_piecewise_linear_function}
	\hat{f} \left( \phi \right) = 
	\begin{cases}
		   	\hat{f}_{1} \left( \phi \right) \triangleq a_1 \phi + b_1, & \phi \in \Phi_1,
		   	\\
		   	~~~~~~~~~~~~~~~~ \vdots
		   	\\	
		   	\hat{f}_{I-1} \left( \phi \right) \triangleq a_{I-1} \phi + b_{I-1}, & \phi \in \Phi_{I-1},
	\end{cases}
\end{equation}
where $ \phi $ denotes the phase change, which can be positive (counter-clockwise) or negative (clockwise). Additionally, $ a_i \in \mathbb{R} $, and $ b_i \in \mathbb{R} $, $ \forall i \in \mathcal{I} $, denote respectively the slope and vertical shift of $ \hat{f}_{i} $ with respect to the origin, where $ \mathcal{I} = \left\lbrace 1, \dots, I - 1 \right\rbrace $. The slope and vertical shift of $ \hat{f}_{i} $ are defined as $ a_i = \frac{w_i-w_{i+1}}{c_i-c_{i+1}} $ and $ b_i = w_{i+1} - \frac{w_i-w_{i+1}}{c_i-c_{i+1}} c_{i+1} $, such that $ c_i \in \mathbb{R} $ and $ w_i \in \mathbb{R}_{+} $ denote respectively the breakpoints in the abscissa and ordinate at which $ \hat{f}_{i} $ and $ \hat{f}_{i+1} $ intersect. Besides, $ \Phi_i = \left[ c_{i}, c_{i+1} \right] $ represents the interval in which function $ \hat{f}_{i} $ prevails over other functions $ \hat{f}_{j \neq i} $.

As established in \textbf{Lemma \ref{lemma_2}}, function $\hat{f}$ can equivalently be expressed as
% Equation
\begin{equation} \label{eq:piecewise-approximation}
	\hat{f} \left( \phi \right) = \max_{i \in \mathcal{I}} \hat{f}_i \left( \phi \right).
\end{equation}

% Lemma 2
\begin{lemma} \label{lemma_2}
	Formulations (\ref{equation_piecewise_linear_function}) and (\ref{eq:piecewise-approximation}) are equivalent for the convex piecewise linear function $ \hat{f} (\phi) $.
\end{lemma}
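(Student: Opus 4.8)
The plan is to exploit the one structural feature that is given for free, namely that $\hat{f}$ is \emph{convex}, which forces the slopes of the linear segments to be monotonically non-decreasing, and then to show that on each interval $\Phi_i$ the corresponding segment $\hat{f}_i$ dominates every other segment pointwise. Since the pointwise maximum of affine functions is automatically convex and the two formulations agree on each $\Phi_i$, this will establish the equivalence of (\ref{equation_piecewise_linear_function}) and (\ref{equation_piecewise_linear_function_equivalence}) over the whole domain $\mathcal{R}$.

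First I would record the consequence of convexity. Because $\hat{f}$ is convex and piecewise linear with breakpoints $c_1 < \cdots < c_I$, the slopes must satisfy $a_1 \le a_2 \le \cdots \le a_{I-1}$; otherwise the graph would bend downward at some breakpoint, contradicting convexity. By construction, consecutive segments meet at the breakpoints, i.e. $\hat{f}_i(c_{i+1}) = \hat{f}_{i+1}(c_{i+1}) = w_{i+1}$ and, symmetrically, $\hat{f}_{i-1}(c_i) = \hat{f}_i(c_i) = w_i$.

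Next, I would fix an index $i$ and a point $\phi \in \Phi_i = [c_i, c_{i+1}]$ and show $\hat{f}_i(\phi) \ge \hat{f}_j(\phi)$ for every $j \in \mathcal{I}$. For the immediate right neighbour $j = i+1$, the two lines agree at $c_{i+1}$ and $a_{i+1} \ge a_i$, so $\hat{f}_i(\phi) - \hat{f}_{i+1}(\phi) = (a_i - a_{i+1})(\phi - c_{i+1}) \ge 0$ whenever $\phi \le c_{i+1}$, since both factors are non-positive. Symmetrically, for the immediate left neighbour $j = i-1$ the lines agree at $c_i$ and $a_i \ge a_{i-1}$, giving $\hat{f}_i(\phi) - \hat{f}_{i-1}(\phi) = (a_i - a_{i-1})(\phi - c_i) \ge 0$ for $\phi \ge c_i$. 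The non-adjacent segments then follow by a telescoping argument: for $j > i+1$ I would chain the inequalities $\hat{f}_i(\phi) \ge \hat{f}_{i+1}(\phi) \ge \cdots \ge \hat{f}_j(\phi)$ using the non-decreasing slopes together with $\phi \le c_{i+1} \le \cdots \le c_j$, and analogously for $j < i-1$. Hence $\hat{f}_i(\phi) = \max_{j \in \mathcal{I}} \hat{f}_j(\phi)$ on $\Phi_i$, which is precisely (\ref{equation_piecewise_linear_function_equivalence}) restricted to that interval; letting $i$ range over $\mathcal{I}$ covers all of $\mathcal{R}$.

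The only delicate point is the transitivity step for non-adjacent segments: one must ensure that the pairwise dominance at shared breakpoints propagates correctly across several segments rather than between neighbours only. I would settle this cleanly by induction on the distance $|j-i|$, invoking the monotone-slope property at each step, which is the real workhorse of the argument. Everything else — convexity of the maximum and agreement of the two expressions on each interval — is immediate.
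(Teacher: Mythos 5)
Your proof is correct, but it reaches the key dominance claim by a genuinely different route than the paper. Both arguments reduce the equivalence of (\ref{equation_piecewise_linear_function}) and (\ref{equation_piecewise_linear_function_equivalence}) to showing that the active segment dominates all others on its own interval, i.e., $\hat{f}_i(\phi) \geq \hat{f}_j(\phi)$ for all $\phi \in \Phi_i$ and all $j \in \mathcal{I}$. You get there structurally: convexity forces the slopes to be non-decreasing, $a_1 \leq \cdots \leq a_{I-1}$; continuity at the breakpoints then gives adjacent-segment dominance via $\hat{f}_i(\phi) - \hat{f}_{i\pm 1}(\phi) = (a_i - a_{i\pm 1})(\phi - c)$ with both factors of matching sign; and non-adjacent segments follow by your telescoping induction, which is valid since $\phi \leq c_{i+1} \leq c_{k+1}$ (respectively $\phi \geq c_i \geq c_k$) holds at every step of the chain. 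The paper instead applies the convexity inequality directly to a pair $\phi' \in \Phi_i$, $\phi'' \in \Phi_j$, rearranges it into a secant-type inequality, chooses $\theta$ so that the intermediate point $\phi^\star = \phi'' + \theta(\phi' - \phi'')$ remains in $\Phi_j$, and substitutes the affine expressions to obtain $a_i \phi' + b_i \geq a_j \phi' + b_j$ in a single step for arbitrary $j$ --- no slope monotonicity and no induction. The trade-off: your argument is more elementary and makes the geometric mechanism (sorted slopes) explicit, at the cost of the case analysis and the induction on $|j-i|$; the paper's argument handles all pairs uniformly and is shorter, but it quietly relies on being able to pick a suitable $\theta$, which needs a word of care when $\phi''$ sits exactly at a breakpoint of $\Phi_j$. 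Either proof is acceptable; just make sure, in your write-up, that the slope-monotonicity step is justified by the monotone difference-quotient property of convex functions rather than only by the informal ``bend downward'' remark.
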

\begin{proof}
	Please, refer to Appendix \ref{app:proof-lemma-1}.
\end{proof}

The proposed modeling framework is general and can be configured with an arbitrary number of segments. As an illustration, Fig.~\ref{figure_approximated_phase_transition_time} compares the ground truth with a piecewise-linear approximation using $ I = 17 $ ($16$ segments). Despite the relatively small number of segments, the approximation error remains below $0.03\%$ with respect to the ground truth, making it an accurate and reliable representation\footnote{Note that the data labeled as ``ground truth'' in Fig.~\ref{figure_approximated_phase_transition_time} corresponds to a transposed version of the data in Fig.~\ref{figure_phase_transition_time}, i.e., with the axes interchanged.}. Consequently, without loss of generality, this approximation is adopted throughout this work, with the corresponding parameters $ w_i $ and $ c_i $ provided in Appendix~\ref{app:parameters-piecewise-function}.

% Figure 3
\begin{figure}[!h]
	\centering
	\includegraphics[width=0.8\columnwidth]{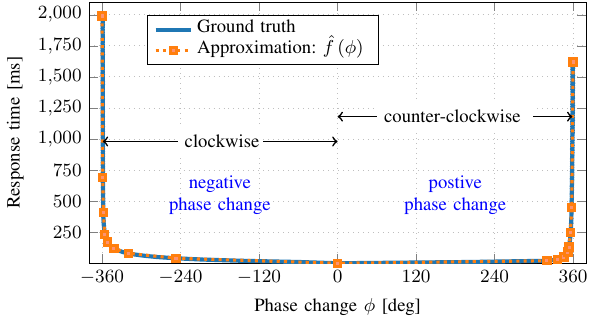} 
	\vspace{-2mm}
	\caption{Ground truth and approximate response time.}
	\label{figure_approximated_phase_transition_time}
	\vspace{-2mm}
\end{figure}

\section{Optimization Framework}

In this section, we first present the solution to \methodA and then extend it to solve \methodB, leveraging the structural similarity between the two problems.

\subsection{\methodA} 
\label{section_methodA}

Problem $ \mathcal{P}^{(l)}_1 $ is a nonconvex nonlinear program and is challenging to solve. We devise an algorithm based on convex optimization principles to solve $ \mathcal{P}^{(l)}_1 $. Specifically, we introduce a sequence of convexification steps designed to systematically handle the nonconvexities of $ \mathcal{P}^{(l)}_1 $.

\subsubsection{Introducing auxiliary decision variables}

We introduce auxiliary decision variables $ \gamma_{l,n} \in \mathbb{R} $ and $ \tau_l \in \mathbb{R}_{+} $ to facilitate the reformulation of $ \mathcal{P}^{(l)}_1 $. Specifically, $ \gamma_{l,n} $ represents the argument of the function $ \hat{f} $, leading to the constraint $ \mathrm{C}_{3}: \gamma_{l,n} = \phi_{l,n} - \phi_{l-1,n} $, $\forall n \in \mathcal{N} $. Moreover, we employ the nonnegative auxiliary variable $ \tau_l $ to bound the maximum response time associated with the $l$-th phase-shift configuration. Accordingly, we introduce constraint $ \mathrm{C}_{4}: \tau_l \geq 0 $, and enforce $ \mathrm{C}_{5}: \max_{n \in \mathcal{N}} \hat{f}(\gamma_{l,n}) \leq \tau_l $, such that the objective function becomes $ \tau_l $. Incorporating these modifications, problem $ \mathcal{P}^{(l)}_1 $ can be equivalently reformulated as
% Problem Psim_hat
\begin{align} 
	% Objective
	\widehat{\mathcal{P}}^{(l)}_1: &  ~~ \underset{ \boldsymbol{\phi}_l, \boldsymbol{\gamma}_l, \tau_l }{\mathrm{minimize}}
	& & 
	\tau_{l}  \nonumber
	\\
	% Constraint C1
	& ~~~~~~ \mathrm{s.t.} & \mathrm{C_{1}}: ~ &  \frac{\delta^2 K^2}{\sigma^2} \left| \mathbf{h}_l^\mathrm{H} \mathrm{diag} ( \mathbf{g}^{*} ) e^{\mathrm{j} \boldsymbol{\phi}_l} \right|^2 \geq \Gamma_{\mathrm{th},l}, \nonumber	
	\\
	% Constraint C2
	& & \mathrm{C_2}: ~ & \phi_{l,n} \in [0, 2\pi), \forall n \in \mathcal{N}, \nonumber
	\\
	% Constraint C3
	& & \mathrm{C_3}: ~ & \phi_{l,n} - \phi_{l-1,n} = \gamma_{l,n}, \forall n \in \mathcal{N}, \nonumber
	\\
	% Constraint C4
	& & \mathrm{C_4}: ~ & \tau_l \geq 0, \nonumber
	\\
	% Constraint C5
	& & \mathrm{C_5}: ~ & \max_{n \in \mathcal{N} } \hat{f} \left( \gamma_{l,n} \right) \leq \tau_l, \nonumber
\end{align}
where $ \boldsymbol{\gamma}_{l} = [ \gamma_{l,1}, \dots, \gamma_{l,N} ] $. By leveraging (\ref{eq:piecewise-approximation}), constraint $ \mathrm{C_5} $ can be recast as $ \max_{n \in \mathcal{N} } \max_{i \in \mathcal{I} } \hat{f}_i \left( \gamma_{l,n} \right) \leq \tau_l $, which is equivalent to
% Equation
\begin{equation} \nonumber
	\mathrm{C_6}: \hat{f}_i \left( \gamma_{l,n} \right) \leq \tau_{l}, \forall n \in \mathcal{N}, i \in \mathcal{I}. \nonumber
\end{equation}

Hence, problem $ \widehat{\mathcal{P}}^{(l)}_1 $ is recast as
\begin{align} 
	% Objective
	\widetilde{\mathcal{P}}^{(l)}_1: \underset{ \boldsymbol{\phi}_l, \boldsymbol{\gamma}_l, \tau_l }{\mathrm{minimize}} ~~~~
	\tau_{l} ~~ \mathrm{s.t.} ~~ \mathrm{C_{1}}, \mathrm{C_2}, \mathrm{C_3}, \mathrm{C_4}, \mathrm{C_6}. \nonumber
\end{align} 

\textsc{Remark:} \emph{The objective of $ \widetilde{\mathcal{P}}^{(l)}_1 $ and constraints $ \mathrm{C}_{2} $, $ \mathrm{C}_{3} $, $ \mathrm{C}_{4} $, and $ \mathrm{C}_{6} $ are convex whereas constraint $ \mathrm{C}_{1} $ is nonconvex, which is addressed in Section \ref{subsection_transforming_phase_selection_multiple_linear_constraints} and Section \ref{subsection_convexifying_nonconvex_constraints}.}

\subsubsection{Transforming phase selection into multiple linear constraints} \label{subsection_transforming_phase_selection_multiple_linear_constraints}

The term $ e^{\mathrm{j} \boldsymbol{\phi}_l} $ in constraint $ \mathrm{C_{1}} $ makes direct optimization over the decision variables $ \phi_{l,n} $ challenging, as these appear within complex exponential functions. To address this, rather than optimizing over $ \phi_{l,n} $, we optimize over $ e^{j\phi_{l,n}} $, which dwell in the unit circle. To this end, we introduce new decision variables $ \mathbf{x}_l = \left[ x_{l,1} \dots, x_{l,N} \right]^\mathrm{T} \in \mathcal{S}^{N \times 1} $, where  $ \mathcal{S} = \left\lbrace 1,  e^{\mathrm{j} \frac{2 \pi}{Q}}, \dots, e^{\mathrm{j} \frac{2 \pi (Q-1)}{Q}} \right\rbrace $ is the set of admissible phases uniformly spaced in the unit circle\footnote{For LC-RISs, high phase-shift resolution can be readily achieved. This capability is represented by a large value of $ Q $, corresponding to $ \log_2(Q) $ quantization bits. In contrast, semiconductor-based RISs are typically restricted to coarse quantization levels, often limited to $1$- or $2$-bit resolution, since achieving higher resolution is prohibitively expensive for such technologies.}. The cardinality of this set is $ Q = \left| \mathcal{S} \right| $. By substituting $ e^{\mathrm{j} \boldsymbol{\phi}_l} $ with $ \mathbf{x}_l $, we eliminate the nonlinearity associated with optimizing $ \boldsymbol{\phi}_l $ in $ \mathrm{C}_{1} $, while introducing modifications to constraints $ \mathrm{C}_{2} $ and $ \mathrm{C}_{3} $. The original constraints $ \mathrm{C}_{1} $, $ \mathrm{C}_{2} $, and $ \mathrm{C}_{3} $ are then reformulated as $ \mathrm{C}_{7} $, $ \mathrm{C}_{8} $, $ \mathrm{C}_{9} $, $ \mathrm{C}_{10} $, and $ \mathrm{C}_{11} $, as shown in below
\begin{equation} \nonumber
	\mathrm{C_{1}}, \mathrm{C_{2}}, \mathrm{C_{3}} \Leftrightarrow
		\begin{cases}
			   	\mathrm{C_7}: \frac{\delta^2 K^2}{\sigma^2} \left| \mathbf{h}_l^\mathrm{H} \mathrm{diag} ( \mathbf{g}^{*} ) \mathbf{x}_l \right|^2 \geq \Gamma_{\mathrm{th},l}, 
			   	\\	
			   	\mathrm{C_{8}}: z_{l,n,q} \in \left\lbrace 0, 1 \right\rbrace, \forall n \in \mathcal{N}, q \in \mathcal{Q}, \nonumber
                \\
			   	\mathrm{C_9}: \mathbf{1}^\mathrm{T} \mathbf{z}_{l,n} = 1, \forall  \in \mathcal{N},
			   	\\	
			   	\mathrm{C_{10}}: x_{l,n} = \mathbf{s}^\mathrm{T} \mathbf{z}_{l,n}, \forall n \in \mathcal{N},
			   	\\	
			   	\mathrm{C_{11}}: \phase{x_{l,n}} - \phase{x_{l-1,n}} = \gamma_{l,n}, \forall n \in \mathcal{N}. \nonumber
		\end{cases}
\end{equation}

Here, $ \mathbf{s} \in \mathbb{C}^{Q \times 1} $ collects all the admissible phase values, i.e., $ \mathbf{s} = \big[ 1, e^{\mathrm{j} \frac{2 \pi}{Q}}, \dots, e^{\mathrm{j} \frac{2 \pi (Q-1)}{Q}} \big]^\mathrm{T} $. Constraint $ \mathrm{C}_{7} $ results directly from substituting $ e^{\mathrm{j} \boldsymbol{\phi}_l} $ with $ \mathbf{x}_l $. Constraint $ \mathrm{C}_{8} $ enforces all elements $ z_{l,n,q} $, constituents of vector $\mathbf{z}_{l,n} $, to be binary, where $ \mathcal{Q} = \left\lbrace 1, \dots, Q \right\rbrace $. Constraint $ \mathrm{C_9} $ ensures that exactly one element of $ \mathbf{z}_{l,n} $ equals $ 1 $. Consequently, $ \mathrm{C}_{10} $ guarantees that each $ x_{l,n} $ assumes one of the admissible values in $ \mathcal{S} $. The operator $ \phase{x_{l,n}} $ in constraint $ \mathrm{C}_{11} $ extracts the phase of $ x_{l,n} $ (in degrees), making $ \mathrm{C}_{11} $ equivalent to $ \mathrm{C}_{3} $.

The phase operator, however, is inherently nonlinear, as it is defined by $ \phase{x_{l,n}} = \arctan{\left( \frac{\mathsf{Im} \left( x_{l,n} \right) }{\mathsf{Re} \left( x_{l,n} \right) }\right) } $, which is highly nonconvex and difficult to handle analytically. Traditional methods approximate this nonlinearity by linearizing $ \arctan \left( x_{l,n} \right) $, but such processes introduce new challenges, most notably, sensitivity to initialization and the need for iterative refinement. Taking advantage that the phase shifts were discretized in the previous step, we propose the following strategy. We introduce vector $ \mathbf{r} $ defined as $ \mathbf{r} = \phase{\mathbf{s}} = \left[ 0, 360/Q, \dots, 360 (Q-1)/Q \right]  $, which stores the phase values (in degrees) corresponding to the elements of $ \mathcal{S} $. By ensuring  one-to-one correspondence between the elements of $ \mathbf{r} $ and $ \mathbf{s} $, i.e., $ r_q = \phase{s_q} $, we can express $ \mathrm{C_{11}} $ equivalently as
% Equation
\begin{align*} \nonumber
	\mathrm{C_{12}}: \mathbf{r}^\mathrm{T} \mathbf{z}_{l,n} - \mathbf{r}^\mathrm{T} \mathbf{z}_{l-1,n} = \gamma_{l,n}, \forall n \in \mathcal{N}. 
\end{align*}

By applying the above transformations, $ \widetilde{\mathcal{P}}^{(l)}_1 $ reduces to $ \widebar{\mathcal{P}}^{(l)}_1 $
\begin{align} 
	% Objective
	\widebar{\mathcal{P}}^{(l)}_1: \underset{ \mathbf{x}_l, \mathbf{Z}_l, \boldsymbol{\gamma}_l, \tau_l }{\mathrm{minimize}} ~~~~
	\tau_{l} ~~ \mathrm{s.t.} ~~ \mathrm{C_4}, \mathrm{C_6}, \mathrm{C_7}, \mathrm{C_8}, \mathrm{C_9}, \mathrm{C_{10}}, \mathrm{C_{12}}, \nonumber
\end{align}
where $ \mathbf{Z}_l = \left[ \mathbf{z}_{l,1}, \dots, \mathbf{z}_{l,N} \right] \in \left\lbrace 0, 1 \right\rbrace^{Q \times N} $.% collects all variables $ \mathbf{z}_{l,n} $, $\forall n \in \mathcal{N} $.

\textsc{Remark:} \emph{In $ \widebar{\mathcal{P}}^{(l)}_1 $, all constraints except $ \mathrm{C_{7}} $ are convex. In the following, we deal with the convexification of $ \mathrm{C_{7}} $.}

\subsubsection{Convexifying nonconvex constraints} \label{subsection_convexifying_nonconvex_constraints}

One possibility to deal with nonconvex constraint $ \mathrm{C_{7}} $ is to linearize it with respect to $ \mathbf{x}_l $ and iteratively refine the solution. However, we adopt a different strategy whereby we replace it by an inner convex approximation of $ \mathrm{C_{7}} $, relying on \textbf{Lemma \ref{lemma_3}}.
% Lemma 3
\begin{lemma} \label{lemma_3}
	For any complex variable $y \in \mathbb{C}$ and non-negative scalar $a \in \mathbb{R}_{+}$, the convex constraint $\mathsf{Re}\left\lbrace y \right\rbrace \geq a$ implies the nonconvex constraint $ |y| \geq a $. This provides a convex sufficient condition for the original nonconvex requirement, as $\left\lbrace y \in \mathbb{C} \mid \mathsf{Re} \left\lbrace y \right\rbrace  \geq a\right\rbrace \subseteq \left\lbrace y \in \mathbb{C} \mid |y| \geq a \right\rbrace $.
\end{lemma}

% See Appendix for proof. 
\begin{proof}
	Please, refer to Appendix \ref{app:proof-lemma-2}.
\end{proof}

Constraint $ \mathrm{C_{7}} $ is equivalent to $ \frac{\delta K}{\sigma} \left| \mathbf{h}_l^\mathrm{H} \mathrm{diag} ( \mathbf{g}^{*} )  \mathbf{x}_l \right|\geq \sqrt{\Gamma_{\mathrm{th},l}} $ after taking the square root of both sides. By invoking \textbf{Lemma \ref{lemma_3}}, constraint $ \mathrm{C_{7}} $ is satisfied if
\begin{align*}
	\mathrm{C_{13}}: \mathsf{Re} \left\lbrace \mathbf{h}_l^\mathrm{H} \mathrm{diag} ( \mathbf{g}^{*} ) \mathbf{x}_l \right\rbrace \geq \sigma\sqrt{\Gamma_{\mathrm{th},l}} / \delta K,
\end{align*}
holds. Consequently, $ \widebar{\mathcal{P}}^{(l)}_1 $ is finally transformed into
% Problem P_check
\begin{align*} 
	% Objective
	\check{\mathcal{P}}^{(l)}_1: & ~~ \underset{ \mathbf{x}_l, \mathbf{Z}_l, \boldsymbol{\gamma}_l, \tau_l }{\mathrm{minimize}}
	& & 
	\tau_{l}
	\\
	% Constraint C4
	& ~~~~~~ \mathrm{s.t.} & \mathrm{C}_{4}: ~ &  \tau_l \geq 0,
	\\
	% Constraint C6
	& & \mathrm{C}_{6}: ~ &  \hat{f}_i \left( \mathbf{r}^\mathrm{T} \mathbf{z}_{l,n} - \mathbf{r}^\mathrm{T} \mathbf{z}_{l-1,n} \right) \leq \tau_{l}, 
	\\
	& & & ~~~~~~~~~~~~~~~~~~~~ \forall n \in \mathcal{N}, i \in \mathcal{I}, \nonumber
	\\
	% Constraint C8
	& & \mathrm{C}_{8}: ~ &  z_{l,n,q} \in \left\lbrace 0, 1 \right\rbrace, \forall n \in \mathcal{N}, q \in \mathcal{Q},  
	\\
	% Constraint C9
	& & \mathrm{C}_{9}: ~ &  \mathbf{1}^\mathrm{T} \mathbf{z}_{l,n} = 1, \forall n \in \mathcal{N},  
	\\
	% Constraint C10
	& & \mathrm{C}_{10}: ~ &  x_{l,n} = \mathbf{s}^\mathrm{T} \mathbf{z}_{l,n}, \forall n \in \mathcal{N},
	\\
	% Constraint C12
	& & \mathrm{C}_{12}: ~ & \mathbf{r}^\mathrm{T} \mathbf{z}_{l,n} - \mathbf{r}^\mathrm{T} \mathbf{z}_{l-1,n} = \gamma_{l,n}, \forall n \in \mathcal{N},
	\\
	% Constraint C13
	& & \mathrm{C}_{13}: ~ & \mathsf{Re} \left\lbrace \mathbf{h}_l^\mathrm{H} \mathrm{diag} ( \mathbf{g}^{*} ) \mathbf{Z}_l^\mathrm{T} \mathbf{s} \right\rbrace \geq \frac{\sigma\sqrt{\Gamma_{\mathrm{th},l}}}{\delta K}. 
\end{align*}

Specifically, problem $ \check{\mathcal{P}}^{(l)}_1 $ is a MILP that guarantees a globally optimal solution. Although its theoretical complexity increases exponentially with $ Q $, it can be solved efficiently in practice by leveraging mature optimization solvers such as MOSEK. In contrast to a brute-force approach that enumerates all possible phase-shift configurations with resolution $ Q $, requiring a computational complexity of $ \mathcal{O}(Q^N)$, the proposed formulation attains the optimal solution at only a fraction of that cost, achieving on average between $0.7\%$ to $5.3\%$ of the exhaustive search runtime in our simulations.

\textsc{Remark:} \emph{Since the feasible region defined by $ \mathrm{C}_{13} $ is a subset of that induced by $ \mathrm{C}_{7} $, any solution to $\check{\mathcal{P}}^{(l)}_1 $ remains feasible for the original problem $\widehat{\mathcal{P}}^{(l)}_1$. However, because $\check{\mathcal{P}}^{(l)}_1 $ operates over a reduced feasible space, the two problems are not guaranteed to yield identical optimal solutions.}

\subsection{\methodB}
\label{section_methodB}

Following the same procedure described in Section~\ref{section_methodA}, we transform problem $ {\mathcal{P}}_2 $ into
% Problem - LiquiRIS | Simultaneaous
\resizebox{1.01\columnwidth}{!}{
  \begin{minipage}{\columnwidth}
  \begin{align*}
	% Objective
	\check{\mathcal{P}}_2: & ~~ \underset{\substack{ \mathbf{x}_1, \dots, \mathbf{x}_L \\ \mathbf{Z}_1, \dots, \mathbf{Z}_L, \\ \boldsymbol{\gamma}_1, \dots, \boldsymbol{\gamma}_L \\ \tau_1, \dots, \tau_L }}{\mathrm{minimize}}
	& & 
	\sum_{l \in \mathcal{L}} \tau_{l}
	\\
	% Constraint D1
	& ~~~~~~ \mathrm{s.t.} & \mathrm{D}_{1}: ~ &  \tau_l \geq 0, \forall l \in \mathcal{L},
	\\
	% Constraint D2
	& & \mathrm{D}_{2}: ~ &  \hat{f}_i \left( \mathbf{r}^\mathrm{T} \mathbf{z}_{l,n} - \mathbf{r}^\mathrm{T} \mathbf{z}_{l-1,n} \right) \leq \tau_{l}, 
	\\
	& & & ~~~~~~~~~~~~~~~~~~~~ \forall l \in \mathcal{L}, n \in \mathcal{N}, i \in \mathcal{I}, \nonumber
	\\
	% Constraint D3
	& & \mathrm{D}_{3}: ~ &  z_{l,n,q} \in \left\lbrace 0, 1 \right\rbrace, \forall l \in \mathcal{L}, n \in \mathcal{N}, q \in \mathcal{Q},  
	\\
	% Constraint D4
	& & \mathrm{D}_{4}: ~ &  \mathbf{1}^\mathrm{T} \mathbf{z}_{l,n} = 1, \forall l \in \mathcal{L}, n \in \mathcal{N},  
	\\
	% Constraint D5
	& & \mathrm{D}_{5}: ~ &  x_{l,n} = \mathbf{s}^\mathrm{T} \mathbf{z}_{l,n}, \forall l \in \mathcal{L}, n \in \mathcal{N},
	\\
	% Constraint D6
	& & \mathrm{D}_{6}: ~ & \mathbf{r}^\mathrm{T} \mathbf{z}_{l,n} - \mathbf{r}^\mathrm{T} \mathbf{z}_{l-1,n} = \gamma_{l,n}, \forall l \in \mathcal{L}, n \in \mathcal{N},
	\\
	% Constraint D7
	& & \mathrm{D}_{7}: ~ & \mathsf{Re} \left\lbrace \mathbf{h}_l^\mathrm{H} \mathrm{diag} ( \mathbf{g}^{*} ) \mathbf{Z}_l^\mathrm{T} \mathbf{s} \right\rbrace \geq \frac{\sigma\sqrt{\Gamma_{\mathrm{th},l}}}{\delta K}, \forall l \in \mathcal{L}. 
\end{align*}
  \end{minipage}
}

Problem $ \check{\mathcal{P}}_2 $ is also a MILP, which can be efficiently solved using standard commercial solvers. In comparison to $ \check{\mathcal{P}}_1^{(l)} $, the number of decision variables in $ \check{\mathcal{P}}_2 $ increases by a factor of $ L $, resulting in a higher computational complexity on the order of $ \mathcal{O}(Q^{L N}) $. Nevertheless, as shown in Section~\ref{section_evalution}, the joint optimization of all phase-shift configurations in $ \check{\mathcal{P}}_2 $ achieves further reductions in the overall reconfiguration time.

\section{Simulation results}   
\label{section_evalution}

This section first evaluates the performance of \sysname, analyzing the impact of phase resolution, SNR threshold, and CSI uncertainty on the reconfiguration time. Subsequently, we compare \sysname with a conventional baseline, referred to as \legacy, which is agnostic to LC's response time.

\subsection{Parameter settings and baseline} \label{sec:parameters-and-baseline}

Unless stated otherwise, we consider $ \delta = 1 $, $ P_\mathrm{BS} = 20 $~W, $ G_\mathrm{BS} = 10 $~dBi, $ G_\mathrm{MT} = 3 $~dBi, $ \sigma^2 = -110 $~dB, $ \Lambda = \Lambda_l = 100 $, and $ \Gamma_{\mathrm{th},l} = \Gamma_{\mathrm{th}} $. The BS is positioned $10$~m from the LC-RIS with fixed angles $ \beta_\mathrm{r}^{(l)} = 50^\circ $ and $ \theta_\mathrm{r}^{(l)} = 90^\circ $. For each MT, its distance from the LC-RIS is drawn uniformly from $[8, 12]$~m, with elevation angle $ \theta_\mathrm{t}^{(l)} = 90^\circ $ and azimuth angle $ \beta_\mathrm{t}^{(l)} $ sampled uniformly from $[95^\circ, 175^\circ]$. To match our mmWave LC-RIS prototype, we set the carrier frequency to $ f_\mathrm{c} = 61 $~GHz and configure the RIS with $ N_\mathrm{x} = 12 $ and $ N_\mathrm{z} = 10 $. Path loss is modeled using the 3GPP UMa model \cite{3gpp:38.901}. All results are averaged over $200$ independent realizations.

\subsection{Impact of quantization and SNR threshold}

This scenario evaluates the reconfiguration time across different phase quantization levels $ Q $, with the goal of identifying a practical quantization level that offers a favorable tradeoff between performance and computational complexity. In parallel, we examine the impact of the SNR threshold  $\Gamma_{\mathrm{th}}$ on the reconfiguration time. For this study, we employ \methodA, as it is more sensitive to quantization and phase selection than \methodB, which partially mitigates these effects through joint optimization across multiple phase-shift configurations.

% Scenario 5
\begin{figure}[!t]
	\centering
	\includegraphics[]{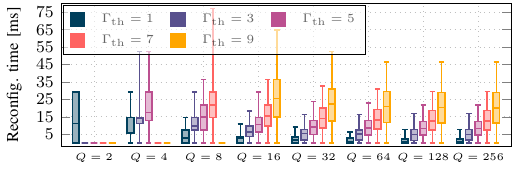} 
	\caption{Impact of quantization and SNR threshold.}
	\label{fig:scenario-1}
	\vspace{-2mm}
\end{figure}
Fig.~\ref{fig:scenario-1} presents the reconfiguration times using a box-and-whisker plot, where the horizontal markers (from top to bottom) denote the \textbf{maximum}, \textbf{75th percentile}, \textbf{mean}, \textbf{25th percentile}, and \textbf{minimum}. Increasing the phase resolution consistently decreases the reconfiguration time. Specifically, the reduction is substantial for $ Q \leq 32 $, while it nearly saturates for $Q \geq 64$. The figure also shows that reconfiguration time grows as the SNR threshold $ \Gamma_{\mathrm{th}} $ becomes more stringent. Specifically, for low $ \Gamma_{\mathrm{th}} $, many phase-shift configurations satisfy the constraint with only minor adjustments, resulting in short reconfiguration times. As $ \Gamma_{\mathrm{th}} $ increases, the feasible space contracts, often demanding larger phase transitions, which leads to longer reconfiguration times.

We note that empty regions without a box-and-whisker plot correspond to infeasible optimization instances, where the limited phase resolution prevents satisfying the required SNR threshold. Overall, $ Q = 64 $ yields an average error below $ 4\% $ relative to $ Q = 256 $, while $ Q = 128 $ reduces this error to below $ 1\% $. Since the goal is to balance performance and computational complexity, $ Q = 64 $ offers an effective and practical choice for subsequent simulations.

\subsection{Impact of imperfect CSI}

This scenario evaluates the reconfiguration time under different levels of CSI uncertainty. We adopt a sampling-based approach in which constraints $ \mathrm{C}_{13} $ and $ \mathrm{D}_{7} $ are enforced over $ U $ perturbed channel realizations. Each realization is generated by adding an error vector to the nominal channel. Formally, the constraints take the following form
\begin{align*}
	\mathsf{Re} \left\lbrace \left( \mathbf{h}_l + \mathbf{e}_{l,u} \right)^\mathrm{H} \mathrm{diag} ( \left( \mathbf{g} + \mathbf{e}_u \right)^{*} ) \mathbf{Z}_l^\mathrm{T} \mathbf{s} \right\rbrace \geq \frac{\sigma\sqrt{\Gamma_{\mathrm{th},l}}}{\delta K}, \forall u \in \mathcal{U}, 
\end{align*}
where the perturbation vectors $ \mathbf{e}_{l,u} $ and $ \mathbf{e}_u $ satisfy $ \left\| \mathbf{e}_{l,u} \right\|_2 \leq \epsilon_l $ and  $ \left\| \mathbf{e}_u \right\|_2 \leq \epsilon $, respectively, and $ \epsilon_l $ and $ \epsilon $ denote the channel error magnitudes. The index set of samples is $ \mathcal{U} = \left\lbrace 1, \dots, U \right\rbrace $.

% Scenario 5
\begin{figure}[!t]
	% Figure 1a
 	\begin{subfigure}[b]{1\columnwidth}
		\begin{center}
		\includegraphics[]{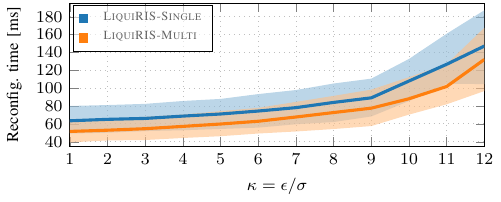}
		\vspace{-1mm} 
		\caption{Tradeoff between reconfiguration time and CSI imperfection}
		\vspace{2mm}
	 	\label{fig:scenario-2a}
		\end{center}
 	\end{subfigure}
	% Figure 1b
 	\begin{subfigure}[b]{0.48\columnwidth}
		\begin{center}
		\includegraphics[]{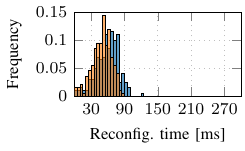} 
		\vspace{-2mm}
		\caption{Histogram $|$ $ \epsilon = 0 $}
		\vspace{2mm}
	 	\label{fig:scenario-2b}
		\end{center}
 	\end{subfigure}
	% Figure 1c
 	\begin{subfigure}[b]{0.48\columnwidth}
		\begin{center}
		\includegraphics[]{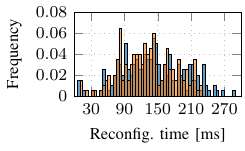} 
		\vspace{-2mm}
		\caption{Histogram $|$ $ \epsilon = 12 \sigma $}
		\vspace{2mm}
	 	\label{fig:scenario-2c}
		\end{center}
 	\end{subfigure}
 	\vspace{-3mm}
    \caption{Impact of imperfect CSI.}
 	\label{fig:scenario-2}
 	\vspace{-2mm}
\end{figure}
% Scenario 5
\begin{figure}[!t]
	% Figure 1a
 	\begin{subfigure}[b]{0.48\columnwidth}
		\begin{center}
		\includegraphics[]{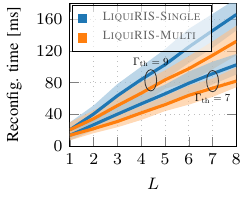}
		\vspace{-5mm} 
		\caption{\sysname $|$ $ \Gamma_{\mathrm{th}} = \left\lbrace 7, 9 \right\rbrace $}
		\vspace{2mm}
	 	\label{fig:scenario-3a}
		\end{center}
 	\end{subfigure}
	% Figure 1b
 	\begin{subfigure}[b]{0.48\columnwidth}
		\begin{center}
		\includegraphics[]{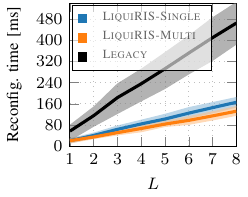} 
		\vspace{-5mm}
		\caption{All approaches $|$ $ \Gamma_{\mathrm{th}} = 9 $}
		\vspace{2mm}
	 	\label{fig:scenario-3b}
		\end{center}
 	\end{subfigure}
	% Figure 1c
 	\begin{subfigure}[b]{0.48\columnwidth}
		\begin{center}
		\includegraphics[]{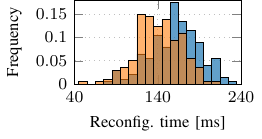} 
		\vspace{-5mm}
		\caption{Histogram $|$ $ L = 8 $ }
		\vspace{2mm}
	 	\label{fig:scenario-3c}
		\end{center}
 	\end{subfigure}
	% Figure 1d
 	\begin{subfigure}[b]{0.48\columnwidth}
		\begin{center}
		\includegraphics[]{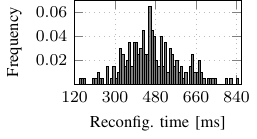} 
		\vspace{-5mm}
		\caption{Histogram $|$ $ L = 8 $ }
		\vspace{2mm}
	 	\label{fig:scenario-3d}
		\end{center}
 	\end{subfigure}
 	\vspace{-3mm}
    \caption{Impact of number of MTs.}
 	\label{fig:scenario-3}
 	\vspace{-5mm}
\end{figure}

Fig.~\ref{fig:scenario-2a} reports the reconfiguration time of \methodA and \methodB for $ L = 3 $, $ \Gamma_{\mathrm{th}} = 9 $, $ U = 1000 $, and $ \epsilon_l = \epsilon = \kappa \sigma $, where the perturbation level $ \epsilon $ is modeled proportional to the channel standard deviation $ \sigma $, and $ \kappa $ serves as a scaling factor. The solid curves show the average reconfiguration time, while the shaded regions indicate the $25$th-$75$th percentile range.

Importantly, the increase in reconfiguration time with CSI uncertainty is a direct consequence of the robust feasibility requirement across multiple channel realizations. Specifically, the LC-RIS response time is determined by the extent of phase change required when transitioning from one configuration to another. Under perfect CSI, the optimizer can select phase-shift configurations that satisfy the SNR constraint while remaining close to the previous configuration, thereby minimizing phase variations and response time. In contrast, under imperfect CSI, the design must satisfy the SNR constraint simultaneously for a set of perturbed channels. This significantly shrinks the feasible set of phase-shift configurations and often forces the optimizer to select solutions that are farther away from the previous configuration. As a result, the required phase transitions become larger on average, which directly increases the reconfiguration time through the phase-dependent response model $ \hat{f}(\cdot) $.

In particular, this effect becomes more pronounced as $ \kappa $ increases. When $ \kappa $ is small, the perturbed channels remain close to the nominal one, and the robust solution is similar to the nominal solution, resulting in only a minor increase in reconfiguration time. However, for larger $ \kappa $, the LC-RIS must accommodate a much wider range of channel variations, which leads to significantly larger phase adjustments and, consequently, longer reconfiguration times. We observe that this increase is approximately linear for $ \kappa \leq 6 $, while it grows rapidly for larger values of $ \kappa $, reflecting the increasingly restrictive feasible set.

Fig.~\ref{fig:scenario-2b} and Fig.~\ref{fig:scenario-2c} show the corresponding histograms for $ \epsilon = 0 $ and $ \epsilon = 12\sigma $, respectively. As $ \epsilon $ increases, the distribution widens considerably. While most reconfiguration times lie between $0$ and $90$ ms for $ \epsilon = 0 $, they spread over the range $0$ to $290$ ms for $ \epsilon = 12\sigma $, highlighting the strong impact of robustness requirements on phase transitions and, consequently, on reconfiguration time.

\subsection{Comparison with baseline}

This section first evaluates the impact of the number of MTs and the SNR threshold on the reconfiguration time of \methodA and \methodB, and then compares both methods with the \legacy baseline to contextualize their performance gains. The \legacy baseline designs the phase-shift configuration for each MT to maximize its SNR without accounting for the reconfiguration time overhead induced by phase transitions. It is formulated as
% Problem formulation: Legacy
\begin{align} 
	% Objective
	\check{\mathcal{P}}_3: & ~~ \underset{\substack{ \mathbf{Z}_1, \dots, \mathbf{Z}_L, \\ \alpha_1, \dots, \alpha_L }}{\mathrm{maximize}}
	& & 
	\sum_{l \in \mathcal{L}} \alpha_{l}  \nonumber
	\\
	% Constraint E1
	& & \mathrm{E}_{1}: ~ & z_{l,n,q} \in \left\lbrace 0, 1 \right\rbrace, \forall l \in \mathcal{L}, n \in \mathcal{N}, q \in \mathcal{Q}, \nonumber
	\\
	% Constraint E2
	& & \mathrm{E}_{2}: ~ & \mathbf{1}^\mathrm{T} \mathbf{z}_{l,n} = 1, \forall l \in \mathcal{L}, n \in \mathcal{N}, \nonumber
	\\
	% Constraint E3
	& & \mathrm{E}_{3}: ~ & \frac{\delta K}{\sigma} \mathsf{Re} \left\lbrace \mathbf{h}_l^\mathrm{H} \mathrm{diag} ( \mathbf{g}^{*} ) \mathbf{Z}_l^\mathrm{T} \mathbf{s} \right\rbrace \geq \alpha_l, \nonumber	
\end{align} 
where $ \alpha_l $ represents an auxiliary variable. Unlike $ \check{\mathcal{P}}_2$, in which the phase-shift configuration for each MT depends on the preceding configuration, $ \check{\mathcal{P}}_3 $ ignores this dependency. As a result, solving the problems sequentially or jointly leads to identical solutions.

Fig.~\ref{fig:scenario-3a} shows the reconfiguration time of \methodA and \methodB as the number of MTs increases, for two SNR thresholds $ \Gamma_{\mathrm{th}} = \left\lbrace 7, 9 \right\rbrace $. Solid lines denote the average and shaded regions span the $25$th-$75$th percentiles. A key observation is that the performance gap between \methodA and \methodB widens with $ L $. When $ \Gamma_\mathrm{th} = 9 $, \methodB is approximately $6$ms faster than \methodA at $ L = 2 $, and nearly $34$ms faster at $ L = 8 $. For $ \Gamma_\mathrm{th} = 7 $, a similar trend is observed, although the difference between the two approaches is less pronounced. These results indicate that \methodA is well suited for scenarios with a small number of MTs, as it incurs lower computational complexity and the additional gains of \methodB remain limited. When the number of MTs becomes moderate, for example $ L \geq 4 $, \methodB becomes the preferred option because the reconfiguration time savings are substantial enough to justify its higher computational cost.

Fig.~\ref{fig:scenario-3b} shows the performance of \legacy, \methodA, and \methodB for $ \Gamma_\mathrm{th} = 9 $, highlighting a pronounced gap in reconfiguration time. Relative to \legacy, \methodB reduces the reconfiguration time by $ 71.61\% $ on average, while \methodA achieves a $ 64.36\% $ reduction.
Fig.~\ref{fig:scenario-3c} and Fig.~\ref{fig:scenario-3d} depict the histograms for $ L = 8 $ and $ \Gamma_\mathrm{th} = 9 $. \methodB exhibits a much tighter concentration around shorter reconfiguration times, whereas \methodA is shifted to the right, indicating noticeably longer delays. We also observe that \legacy spans a broad interval from $120$ to $840$ ms, which explains its significantly higher reconfiguration time.

\section{Experimental results}

This section presents real-world results obtained using our LC-RIS prototype, which consists of  $ 120 $ unit cells, as mentioned in Section~\ref{sec:parameters-and-baseline}.\footnote{Additional details of the employed LC-RIS prototype are provided in Appendix~\ref{app:characteristics-prototype}.} The measurement setup, illustrated in Fig.~\ref{figure_Measurement_Setup}, employs two mmWave V-band horn antennas, denoted as \textsc{Antenna~1} and \textsc{Antenna~2}, positioned approximately $ 1 $m and $ 60 $ cm from the LC-RIS, respectively.
% Figure: Scenario S1
\begin{figure}[!h]
    \centering
    \includegraphics[width=0.45\textwidth]{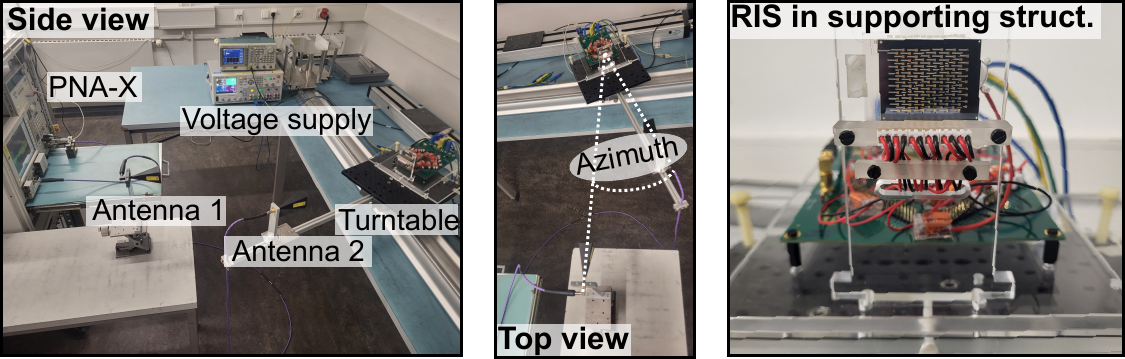} 
    \caption{Bistatic measurement setup for \sysname.}
    \label{figure_Measurement_Setup}
\end{figure}

\textsc{Antenna~2} is mechanically coupled to the turntable on which the LC-RIS is mounted, ensuring co-rotation during angular sweeps. Both antennas are connected to a Keysight PNA-X N5247A network analyzer. The LC-RIS is driven by a DAC60096 EVM that supplies a $1$ kHz square-wave biasing signal. To isolate the LC-RIS reconfiguration time from environmental reflections, two reference measurements are acquired and time gating is applied.

To evaluate the performance of \sysname and compare it against the \legacy baseline, we focus on two aspects:

$\bullet$ \emph{Beampattern}. Since \sysname computes phase-shift configurations under stricter constraints than \legacy, we measure the resulting beampatterns to assess how these constraints translate into directional performance.

$\bullet$ \emph{Reconfiguration time}. As \sysname is explicitly designed to reduce the reconfiguration time between successive phase-shift configurations, we also measure the resulting reconfiguration time required to form the desired beampattern.

 \subsection{Performance of \methodA}
\label{ss_methodA}

Fig.~\ref{fig:scenario-4} evaluates the performance of \methodA relative to \legacy for the case $ L = 4 $ with azimuth angles $ \boldsymbol{\beta}_\mathrm{t} = \left[ 140^\circ, 120^\circ, 60^\circ, 40^\circ \right] $\footnote{ The default starting angle is $90^\circ$ (i.e., $ \beta_\mathrm{t} = 90^\circ $), and the initial phase vector for $ l = 0 $ is set to $ \mathbf{x}_0 = [1,\dots,1]^\mathrm{T} $. Since the number of MTs is small, we set $ Q = 256 $ to ensure high-precision phase quantization.}. As shown in Fig.~\ref{fig:scenario-4a}, \methodA produces beampatterns that closely match those of \legacy. In particular, $(i)$ the main-lobe gain achieved by \methodA is nearly identical to that of \legacy, and $(ii)$ the overall beampattern maintains high fidelity, confirming that \methodA does not introduce distortions. The corresponding reconfiguration times in Fig.~\ref{fig:scenario-4b} further highlight the advantages of \methodA. On average across all transitions, it reduces the reconfiguration time by $39.67\%$, with a maximum reduction of $65.58\%$ when transitioning from $120^\circ$ to $60^\circ$ and a minimum reduction of $25.14\%$ when transitioning from $60^\circ$ to $ 40^\circ $.

\begin{figure}[!t]
	\centering
	
	\begin{subfigure}{1\columnwidth}
	  \begin{center}
	  \vspace{1mm}
	  \includegraphics[height=5.8cm]{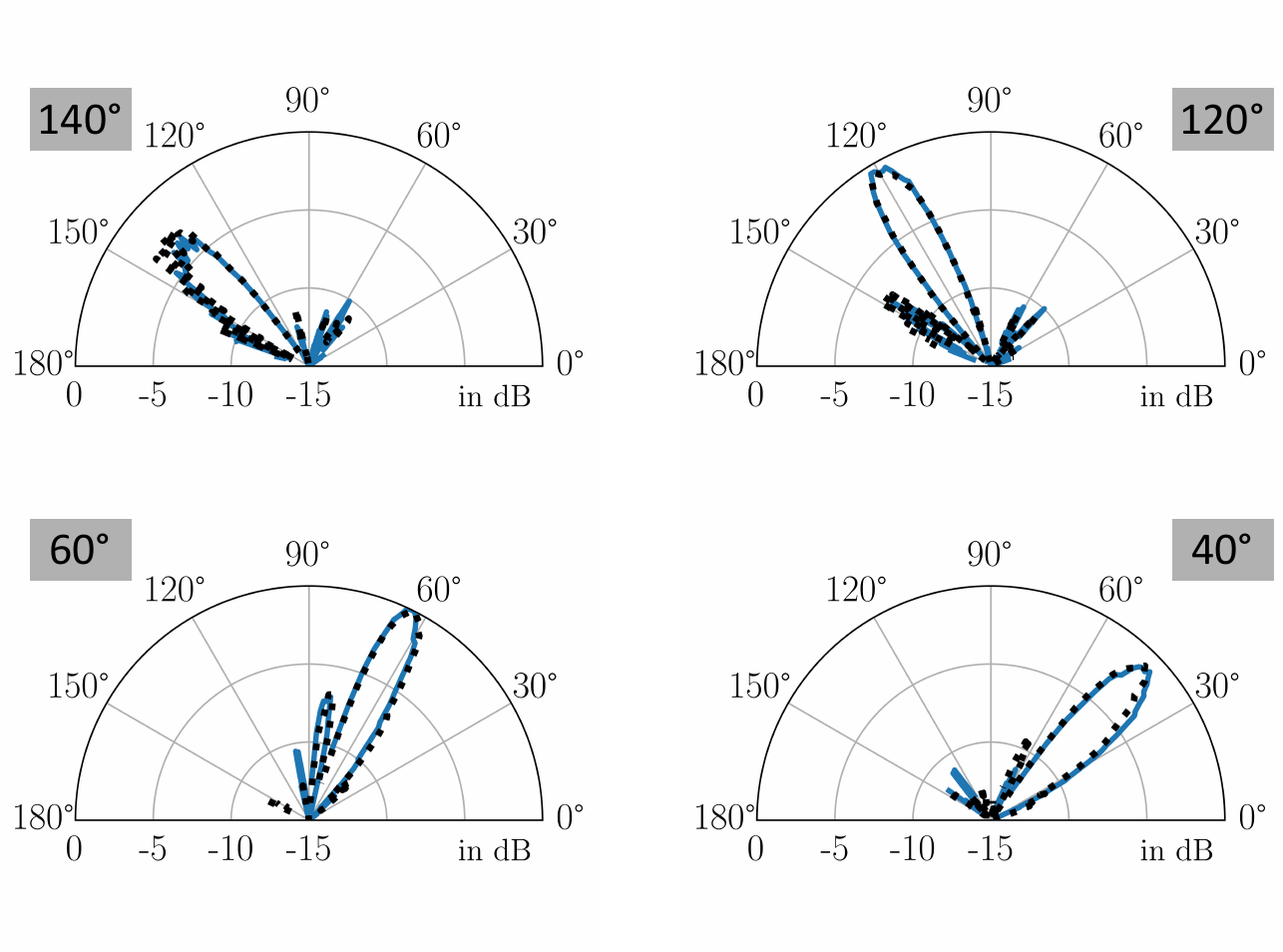} 
	  \caption{}
	  \label{fig:scenario-4a}
	  \vspace{3mm}
	  \end{center}
	\end{subfigure}
	
	\begin{subfigure}{1\columnwidth}
	    \begin{center}
	    \includegraphics[height=4.2cm]{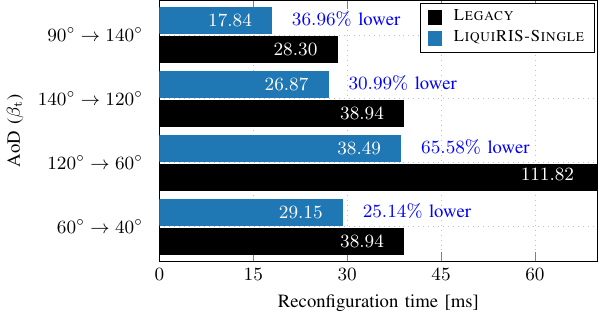}
	    \caption{}
	    \label{fig:scenario-4b}    
	    \end{center}
	\end{subfigure}
	\caption{Comparing \methodA and \legacy.}
	\label{fig:scenario-4}
	\vspace{-3mm}
\end{figure}
% Figure: Scenario S2
\begin{figure}[!t]
    
    \begin{subfigure}{1\columnwidth}
    	\begin{center}
        \includegraphics[height=5.8cm]{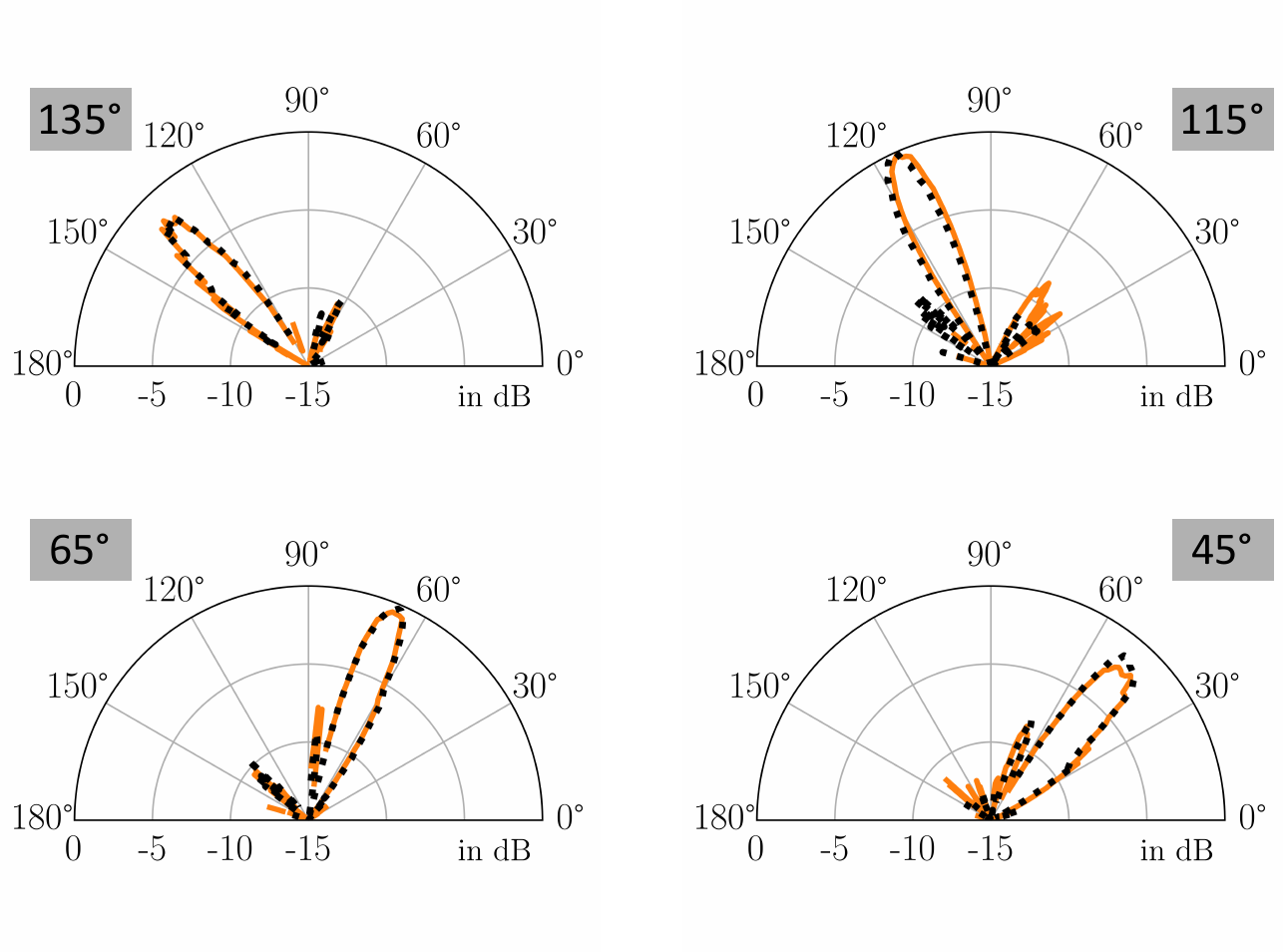} 
          \caption{}
        \label{fig:scenario-5a}    
        \vspace{3mm}
        \end{center}
    \end{subfigure}%

    \begin{subfigure}{1\columnwidth}
    \begin{center}
      \includegraphics[height=4.2cm]{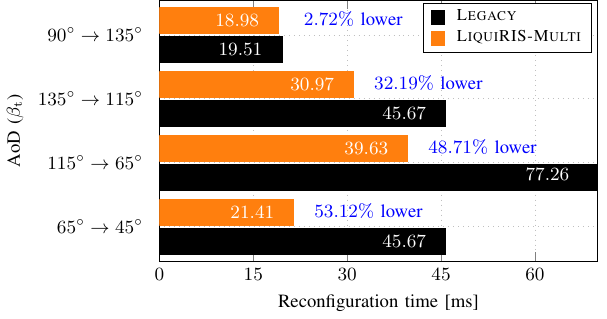}
        \caption{}
        \label{fig:scenario-5b}     
    \end{center}      
    \end{subfigure}
    
    \caption{Comparing \methodB and \legacy.}
    \label{fig:scenario-5}
    \vspace{-3mm}
\end{figure}

\subsection{Performance of \methodB}
\label{ss_methodb}
% Figure: Scenario S3

Fig.~\ref{fig:scenario-5} evaluates the performance of \methodB relative to \legacy. In this setting, \methodB jointly designs all phase-shift configurations for $ L = 4 $, using azimuth angles $ \boldsymbol{\beta}_\mathrm{t} = \left[ 135^\circ, 115^\circ, 65^\circ, 45^\circ \right] $. As shown in Fig.~\ref{fig:scenario-5a}, \methodB produces beampatterns that closely match those of \legacy, indicating that optimizing for reconfiguration time does not significantly degrade the beampattern quality. Meanwhile, Fig.~\ref{fig:scenario-5b} demonstrates that \methodB substantially reduces the reconfiguration time. Across the four transitions, \methodB yields an average reduction of $34.19\%$ compared to \legacy.

The reduction achieved by \methodB is lower than the $ 39.67\% $ reduction obtained by \methodA in the previous scenario. This difference, however, stems from the differing channel conditions, which are strongly characterized by $ \boldsymbol{\beta}_\mathrm{t} $. In the next scenario, we evaluate both approaches under identical settings to enable a direct comparison.

\begin{figure}[!t]
	\begin{subfigure}{1\columnwidth}
		\begin{center}
			\vspace{1mm}
			\includegraphics[height=5.8cm]{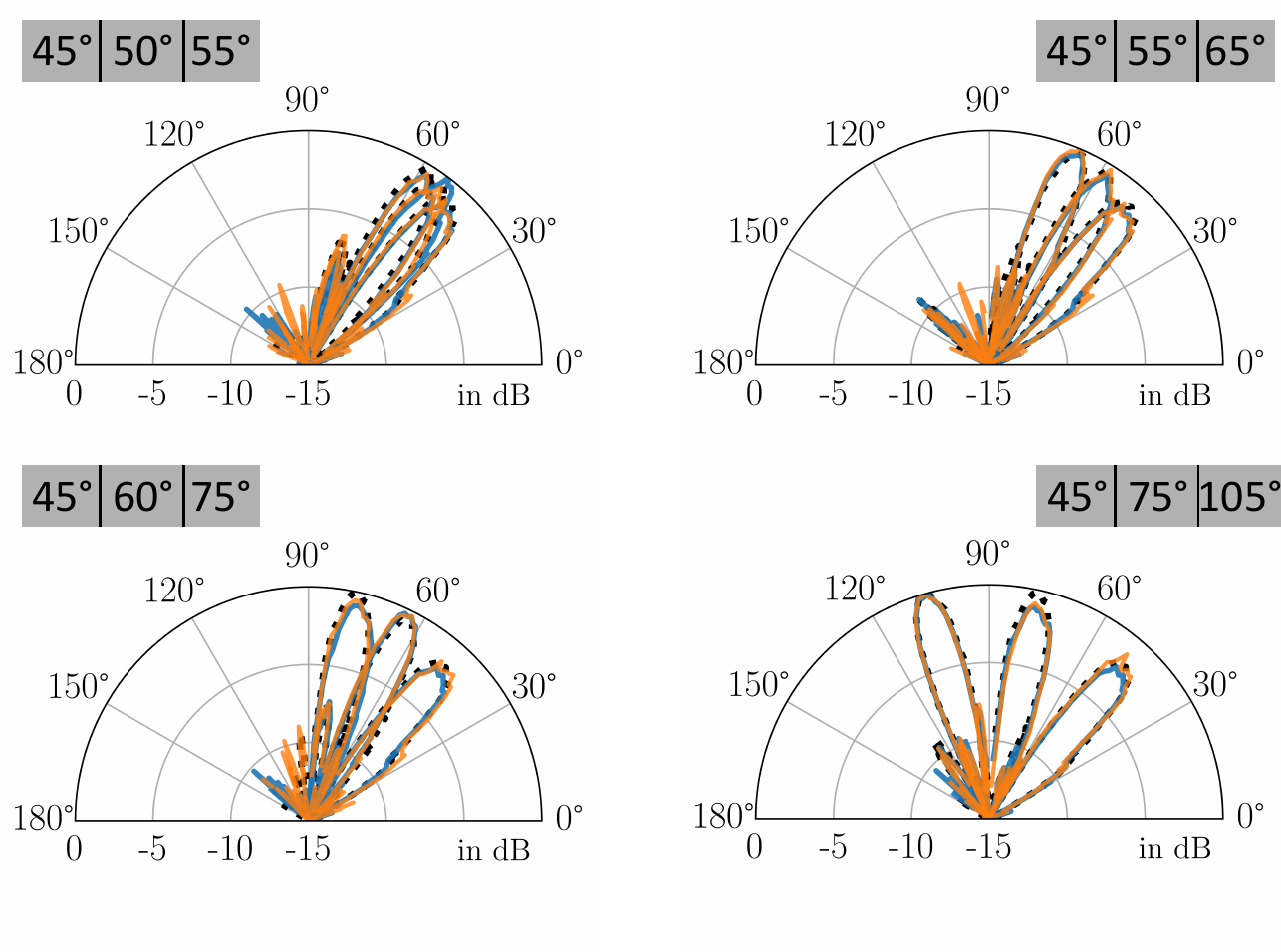} 
			\caption{}
			\label{fig:scenario-6a}   
			\vspace{3mm}
		\end{center} 
	\end{subfigure}%
	
	\begin{subfigure}{1\columnwidth}
		\begin{center}
			\includegraphics[height=6.5cm]{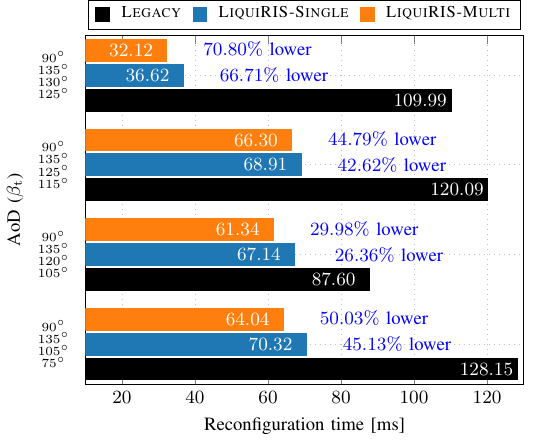}
			\caption{}
			\label{fig:scenario-6b}  
		\end{center}       
	\end{subfigure}
	\caption{Comparing all three approaches.}
	\label{fig:scenario-6}
	\vspace{-5mm}
\end{figure}

\subsection{Impact of angular separation} \label{ss_angular_separation}

Fig.~\ref{fig:scenario-6} compares all three approaches under varying angular separations between the served MTs. We consider $ L = 3 $, where the MTs are separated by $5^\circ$, $10^\circ$, $15^\circ$, and $30^\circ$. As shown in Fig.~\ref{fig:scenario-6a}, both \methodA and \methodB generate beampatterns that closely match those of \legacy across all angular separations. The reconfiguration times reported in Fig.~\ref{fig:scenario-6b} show that \sysname consistently outperforms \legacy, regardless of how closely spaced the MTs are. Moreover, \methodB achieves an average reconfiguration time that is $8.41\%$ lower than that of \methodA. In this setting, \methodB yields maximum and minimum reductions of $70.80\%$ and $29.98\%$, respectively, while \methodA achieves maximum and minimum reductions of $66.71\%$ and $26.36\%$.

Overall, across all transitions, \sysname reduces the total reconfiguration time to approximately half of that required by \legacy, representing a substantial improvement in responsiveness.

\section{Conclusion}

This paper advances the practical deployment of RISs by introducing a novel, physics-informed phase-shift optimization design for \gls{LC-RIS}. Our approach directly tackles the critical bottleneck of response time, which is a key challenge for large-scale adoption. Through comprehensive system-level simulations, we have evaluated the impact of real-world parameters, such as the number of MTs, SNR thresholds, phase-quantization levels, and imperfect CSI, on reconfiguration time. We also validated the practical feasibility of this design through real-world experiments with a functional hardware prototype. Our work demonstrates that \sysname provides a cost-effective solution for integrating RIS technology into future wireless systems, advancing the promise of large-scale programmable radio environments.

% Table
\begin{table*} [!t]
	\vspace{4mm}
	\begin{center}
		\begin{tabular}{|c >{\columncolor[gray]{0.9}}c c >{\columncolor[gray]{0.9}}c c >{\columncolor[gray]{0.9}}c c >{\columncolor[gray]{0.9}}c c >{\columncolor[gray]{0.9}}c c >{\columncolor[gray]{0.9}}c c >{\columncolor[gray]{0.9}}c c >{\columncolor[gray]{0.9}}c c|} 
		 \hline
		 $ w_{1} $ & $ w_{2} $ & $ w_{3} $ & $ w_{4} $ & $ w_{5} $ & $ w_{6} $ & $ w_{7} $ & $ w_{8} $ & $ w_{9} $ & $ w_{10} $ & $ w_{11} $ & $ w_{12} $ & $ w_{13} $ & $ w_{14} $ & $ w_{15} $ & $ w_{16} $ & $ w_{17} $ \\ 
		 \hline
		 $ 1990 $ & $ 690 $ & $ 410 $ & $ 230 $ & $ 170 $ & $ 120 $ & $ 80 $ & $ 40 $ & $ 0 $ & $ 20 $ & $ 30 $ & $ 50 $ & $ 90 $ & $ 130 $ & $ 250 $ & $ 450 $ & $ 1620 $ \\ 
		 \hline
		 \hline 
		 $ c_{1} $ & $ c_{2} $ & $ c_{3} $ & $ c_{4} $ & $ c_{5} $ & $ c_{6} $ & $ c_{7} $ & $ c_{8} $ & $ c_{9} $ & $ c_{10} $ & $ c_{11} $ & $ c_{12} $ & $ c_{13} $ & $ c_{14} $ & $ c_{15} $ & $ c_{16} $ & $ c_{17} $ \\
		 \hline
		 $ -360 $ & $ -359 $ & $ -358 $ & $ -356 $ & $ -352 $ & $ -342 $ & $ -320 $ & $ -247 $ & $ 0 $ & $ 320 $ & $ 336 $ & $ 346 $ & $ 352 $ & $ 354 $ & $ 356 $ & $ 358 $ & $ 360 $ \\
		 \hline
		\end{tabular}
		\caption{Breakpoints defining piecewise linear function $ \hat{f} $.}
		\label{table_samples}
	\end{center}
	\vspace{-4mm}
\end{table*}

\section*{Acknowledgment}

The research was in part funded by the Deutsche Forschungsgemeinschaft (DFG) within the B5G-Cell project (210487104) in SFB 1053 MAKI, and the C09 project within the TRR 196 Marie (287022738), in part by the HyRIS project (455077022), in part by the LOEWE initiative (Hesse, Germany) within the emergenCITY Centre under Grant LOEWE/1/12/519/03/05.001(0016)/72, and in part by the German Federal Ministry for Research, Technology and Space (BMFTR) under the program of ``Souverän. Digital. Vernetzt.'' joint project Open6GHub plus (Project-ID 16KIS2407).

\appendix

\subsection{Proof to Lemma 1} \label{app:proof-lemma-1}

        Since $ \hat{f} $ is convex, the following inequality holds true, for $ i \in \mathcal{I} $ and $ \theta \in \left[ 0,1 \right] $,
	\begin{align} \label{equation_convexity_definition}
		\hat{f} \left( \theta \phi' + \left( 1 - \theta \right) \phi'' \right) \leq \theta \hat{f} \left( \phi' \right) + \left( 1 - \theta \right) \hat{f} \left( \phi'' \right). 
	\end{align}
 
	Upon reordering the terms in (\ref{equation_convexity_definition}), we obtain 
	\begin{align} \label{equation_alternative_convexity_definition}
		\hat{f} \left( \phi' \right) \geq \hat{f} \left( \phi'' \right) + \frac{\hat{f} \left( \phi'' + \theta \left( \phi' - \phi'' \right) \right) - \hat{f} \left( \phi''\right) }{\theta}.
	\end{align}
	
	We assume that $ \phi' \in \Phi_i $ and $ \phi'' \in \Phi_j $, $ i, j \in \mathcal{I} $, and choose $ \theta $ such that $ \phi^\star = \phi'' + \theta \left( \phi' - \phi'' \right) \in \Phi_j $. From (\ref{equation_piecewise_linear_function}), we have
	\begin{align}
		& \hat{f} \left( \phi' \right) = \hat{f}_i \left( \phi' \right) = a_i \phi' + b_i, \nonumber \\
		& \hat{f} \left( \phi'' \right) = \hat{f}_j \left( \phi'' \right) = a_j \phi'' + b_j, \nonumber \\
		& \hat{f} \left( \phi^\star \right) = \hat{f}_j \left( \phi^\star \right) = a_j \phi^\star + b_j. \nonumber
	\end{align}
	
 	Upon replacing the above relations in (\ref{equation_alternative_convexity_definition}), it follows that 
	\begin{align}
		a_i \phi' + b_i & \geq a_j \phi' + b_j, \phi' \in \Phi_i, \forall j \in \mathcal{I}, \nonumber
	\end{align}
	which is equivalent to
	\begin{align}
		\hat{f}_i \left( \phi' \right) & \geq \hat{f}_j \left( \phi' \right), \phi' \in \Phi_i, \forall j \in \mathcal{I}, \nonumber \\
									   % & \geq \max_{j \in \mathcal{I}} \hat{f}_j \left( \phi' \right), \phi' \in \Phi_i, \nonumber \\
									   & = \max_{j \in \mathcal{I}} \hat{f}_j \left( \phi' \right) = \hat{f} \left( \phi' \right), \phi' \in \Phi_i, \nonumber 
	\end{align}	
	thus completing the proof.

\subsection{Parameters of piecewise linear function } \label{app:parameters-piecewise-function}

Table~\ref{table_samples} summarizes the parameters used to approximate the response-time function.

\subsection{Proof of Lemma 2} \label{app:proof-lemma-2}

By definition, $|y| = \sqrt{(\mathsf{Re} \left\lbrace y \right\rbrace )^2 + (\mathsf{Im} \left\lbrace y \right\rbrace )^2}$. It follows directly that $|y| \geq |\mathsf{Re} \left\lbrace y \right\rbrace |$. If $\mathsf{Re} \left\lbrace y \right\rbrace \geq a$, for $ a \geq 0 $, then $|y| \geq \mathsf{Re} \left\lbrace y \right\rbrace \geq a$. Thus, $ \mathsf{Re} \left\lbrace y \right\rbrace \geq a $ is a sufficient condition for $|y| \geq a$. The lower bound is tight when $\mathsf{Im} \left\lbrace y \right\rbrace = 0 $ and $ \mathsf{Re} \left\lbrace y \right\rbrace \geq 0 $, making $|y| = \mathsf{Re} \left\lbrace y \right\rbrace $.

\subsection{Characteristics of LC-RIS prototype} \label{app:characteristics-prototype}

We summarize herein the key characteristics of the LC-RIS prototype, which is described in detail in \cite{neuder_architecture_2023}. 
The prototype consists of 120 unit cells and is designed to operate at a central frequency of \SI{60}{\GHz}.
Based on the delay line architecture, a dedicated, LC-filled phase shifter is employed for the phase tuning in each unit cell.
The phase shifters are coupled to patch antenna elements for the reception and re-reradiation of electromagnetic waves.

The prototype features GT7-29001 from Merck Electronics KGaA, Darmstadt, Germany as the LC-mixture, AF32 glass from Schott and gold as a conductor.
The glass substrate is \SI{700}{\micro \m} thick and the LC layer thickness corresponds to \SI{4.6}{\micro \m}, which leads to LC response times in the range of milliseconds\footnote{https://www.merckgroup.com/en/news/liquid-crystal-smart-antennas-29-03-2021.html}.
The insertion loss corresponds to \SI{-6.6}{\dB}.
Furthermore, the prototype supports a bandwidth of \SI{6.8}{\GHz}~(10.9\%) under a -\SI{6}{\dB} criterion.
With each phase shifter featuring full 360° tunability, the RIS provides effective beamsteering within an angular range of $\pm 50$°.

A low static power consumption is verified using an E4980A Precision LCR Meter with a $1$ KHz sinusoidal signal.
We compute the average power consumption under the assumption of equal likelihood of all differential phases.
Consequently, power consumption was evaluated across varying bias voltages (i.e., phases).
The average power consumption of an individual delay line phase shifter in the proposed \gls{LC-RIS} is $\mathrm{\SI{21.5}{\nW}}$.
Hence, for a large RIS with $\mathrm{10^6}$ phase shifters (corresponding to a 5 x $\mathrm{\SI{5}{\m}}$ aperture for an element spacing of $\mathrm{\lambda_0/2}$), the static power consumption would add up to $\mathrm{\SI{21.5}{\mW}}$ for \sysname.
As an example for a PIN diode, the power consumption reported in~\cite{Tang2021} at $\mathrm{\SI{10.5}{\GHz}}$ equals $\mathrm{\SI{0.33}{\mW}}$ per diode in an ON-state.
Hence, a PIN diode-based RIS with $\mathrm{10^6}$ elements
where half of the diodes are on would therefore present a
$\mathrm{\SI{165}{\W}}$, $\mathrm{\SI{330}{\W}}$ or $\mathrm{\SI{495}{\W}}$ static power consumption to bias the PIN diodes in a 1-, 2-, and 3-bit configuration, respectively.

\bibliographystyle{IEEEtran}
\bibliography{IEEEabrv,ref}

\end{document}